\newcommand{\confey}{\includegraphics[width=4mm,height=4mm]{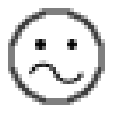}}
\newcommand{\smiley}{\includegraphics[width=4mm,height=4mm]{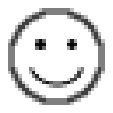}}
\newcommand{\frowney}{\includegraphics[width=4mm,height=4mm]{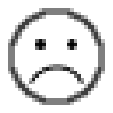}}
\newcommand{\neutrey}{\includegraphics[width=4mm,height=4mm]{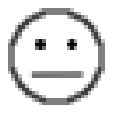}}
\newtheorem{thm}{Theorem}[subsection]
\newtheorem{claim}[thm]{Claim}
\newtheorem{cor}[thm]{Corollary}
\newtheorem{con}[thm]{Conjecture}
\newtheorem{neutrey_con}[thm]{\neutrey~~Conjecture}
\newtheorem{smiley_con}[thm]{\smiley~~Conjecture}
\newtheorem{smiley_confey_con}[thm]{$\mathrm{\smiley}\mathrm{\confey}$~~Conjecture}
\newtheorem{frowney_question}[thm]{\frowney~~Question}
\newtheorem{question}[thm]{Question}
\newtheorem{example}[thm]{Example}
\theoremstyle{remark}
\newtheorem{rem}[thm]{Remark}
\theoremstyle{definition}
\newtheorem{defn}[thm]{Definition}
\DeclareMathOperator{\rank}{\operatorname{rank}}
\DeclareMathOperator{\CREW}{\operatorname{CREW}}
\DeclareMathOperator{\poly}{\operatorname{poly}}
\begin{document}

\title{Variations on the Sensitivity Conjecture}
\author{
{\bf Pooya Hatami} \and
{\bf Raghav Kulkarni} \and
{\bf Denis Pankratov} \and {\small} \\
\\[-8mm]
{University of Chicago, \it{Department of Computer Science}} \\
{\small e-mail: pooya, raghav, pankratov@cs.uchicago.edu}
}


\maketitle

\begin{abstract}

We present a selection of known as well as new variants of the
Sensitivity Conjecture and  point out some weaker versions that are
also open.

\end{abstract}

{\small {\bf Keywords:} sensitivity, block sensitivity, complexity
  measures of Boolean functions}

\setcounter{thm}{0}
\numberwithin{thm}{section}

\section{The Sensitivity Conjecture}
Let $f : \{0, 1\}^n \rightarrow \{0, 1\}$ be a Boolean function.  Let
$e_i \in \{0, 1\}^n$ denote an $n$-bit Boolean string whose
$i^\mathrm{th}$ bit is $1$ and the rest of the bits are $0.$ On an
input $x \in \{0, 1\}^n,$ the $i^\mathrm{th}$ bit is said to be {\em
  sensitive} for $f$ if $f(x \oplus e_i) \neq f(x),$ i.e., flipping
the $i^\mathrm{th}$ bit results in flipping the output of $f.$ The
sensitivity of $f$ on input $x,$ denoted by $s(f, x)$, is the number of
bits that are sensitive for $f$ on input $x.$

\begin{defn}
  The {\em sensitivity} of a Boolean function $f,$ denoted by $s(f)$,
  is the maximum value of $s(f, x)$ over all choices of $x$.
\end{defn}

Study of sensitivity of Boolean functions originated from Cook and
Dwork~\cite{Cook82} and Reischuk~\cite{Reischuk82}. They showed an
$\Omega(\log s(f))$ lower bound on the number of steps required to
compute a Boolean function $f$ on a CREW PRAM. A CREW PRAM,
abbreviated from Consecutive Read Exclusive Write Parallel RAM, is a
collection of synchronized processors computing in parallel with
access to a shared memory with no write conflicts. The minimum number
of steps required to compute a function $f$ on a CREW PRAM is denoted
by $\CREW(f)$. After Cook, Dwork and Reischuk introduced sensitivity,
Nisan~\cite{Nisan89} found a way to modify the definition of
sensitivity to characterize $\CREW(f)$ exactly. He introduced a
related notion called {\em block sensitivity}. A block $B$ is a subset
of $[n] = \{1, 2, \ldots, n \}$. Let $e_B \in \{0, 1\}^n$ denote the
characteristic vector of $B,$ i.e., the $i^\mathrm{th}$ bit of $e_B$
is $1$ if $i \in B$ and $0$ otherwise.  We say that a block $B$ is
sensitive for $f$ on $x$ if $f(x \oplus e_B) \neq f(x)$. The block
sensitivity of $f$ on $x$, denoted by $bs(f,x)$, is the maximum number
of {\em pairwise disjoint} sensitive blocks of $f$ on $x$.

\begin{defn}
The {\em block sensitivity} of a Boolean function $f$, denoted by $bs(f)$,
is the maximum possible value of $bs(f,x)$ over all choices of $x$.
\end{defn}

Obviously,  for every Boolean function $f$,
\[ s(f) \le bs(f). \]

Nisan's influential result~\cite{Nisan89} states that $\CREW(f) =
\Theta(\log bs(f))$ for every Boolean function $f$. Block sensitivity
turned out to be polynomially related to a number of other complexity
measures (see Section~\ref{MeasuresEquivSec}); however, to this day it
remains unknown whether block sensitivity is bounded above by a
polynomial in sensitivity. The following conjecture, known as the
Sensitivity Conjecture, is due to Nisan and Szegedy~\cite{Nisan92}.

\begin{con}[{\bf Sensitivity Conjecture}, Nisan and Szegedy]
\label{BSCon}
For every Boolean function $f$,
\[bs(f) \le \poly(s(f)).\]
\end{con}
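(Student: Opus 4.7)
The plan is to reduce the conjecture to a purely graph-theoretic statement about the Boolean hypercube $Q_n$, and then attack that statement spectrally. First, I would route through the polynomial degree $\deg(f)$, the degree of the unique multilinear real polynomial representing $f$. By a classical result of Nisan and Szegedy, $bs(f) \le \poly(\deg(f))$, so it suffices to prove $\deg(f) \le \poly(s(f))$.

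Next, I would invoke Gotsman and Linial's reduction, which says that establishing $\deg(f) \le s(f)^2$ for every Boolean $f$ is equivalent to the following combinatorial claim about the hypercube: for every $n$, every induced subgraph of $Q_n$ on more than $2^{n-1}$ vertices contains a vertex of degree at least $\sqrt{n}$. At this point the problem is about graphs rather than functions, and becomes amenable to linear-algebraic techniques.

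To prove the induced-subgraph claim, I would construct a signed adjacency matrix $A_n$ of $Q_n$, defined recursively, with the key property that $A_n^2 = n \cdot I$; this pins the spectrum of $A_n$ at $\pm\sqrt{n}$, each eigenvalue with multiplicity $2^{n-1}$. Given any $S \subseteq V(Q_n)$ with $|S| > 2^{n-1}$, Cauchy's interlacing theorem then forces the largest eigenvalue of the principal submatrix $A_n[S]$ to be at least $\sqrt{n}$. Since the spectral radius of a symmetric matrix is bounded by its maximum $\ell_1$ row norm, and the row norms of $A_n[S]$ are at most the degrees of the corresponding vertices in the induced subgraph, some vertex of $S$ must have degree at least $\sqrt{n}$, as required.

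The main obstacle will be the recursive construction of the signing: exhibiting an explicit $A_n$ satisfying $A_n^2 = nI$ (and verifying this identity by induction on $n$) is the creative heart of the argument, requiring a nontrivial choice of $\pm 1$ weights rather than a routine calculation. Once such a signing is produced, the interlacing-plus-spectral-radius step is short and standard, and chaining it with the Gotsman--Linial and Nisan--Szegedy reductions yields $bs(f) \le \poly(s(f))$, settling the conjecture.
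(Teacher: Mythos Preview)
The paper does not contain a proof of this statement: it is presented there as an open \emph{conjecture}, and the entire paper is a survey of its reformulations and consequences. So there is no ``paper's own proof'' to compare your proposal against.

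That said, your plan is correct and is precisely Huang's 2019 resolution of the Sensitivity Conjecture. The paper already contains one of your ingredients, the Gotsman--Linial reduction (Theorem~\ref{GraphThm} and Conjecture~\ref{GraphCon} in Section~\ref{SubgraphSec}), which converts ``$\deg(f)\le\poly(s(f))$'' into the induced-subgraph statement about $Q_n$. What the paper does \emph{not} have, because it was not known at the time, is your spectral step: the recursive signing $A_n$ of the hypercube adjacency matrix with $A_n^2=nI$, combined with Cauchy interlacing and the $\ell_1$-row-norm bound on the spectral radius. You correctly identify the construction of the signing as the creative core; the standard choice is $A_1=\begin{pmatrix}0&1\\1&0\end{pmatrix}$ and $A_n=\begin{pmatrix}A_{n-1}&I\\I&-A_{n-1}\end{pmatrix}$, and verifying $A_n^2=nI$ is a one-line induction. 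With that in hand, your outline goes through and in fact yields $s(f)\ge\sqrt{\deg(f)}$, hence $bs(f)\le 2\,s(f)^4$.
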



The rest of the paper is organized as follows. In
Section~\ref{MeasuresEquivSec} we describe complexity measures of
Boolean functions polynomially related to block sensitivity. In
Section~\ref{ProgressSec} we review progress on the Sensitivity
Conjecture. In Section~\ref{OtherSec} we present alternative
formulations of the Sensitivity Conjecture and point out weaker
versions that are also open. Along the way we encounter important
examples of Boolean functions. We present these functions in
Section~\ref{ExamplesSec}.

\section{Measures Related to Block Sensitivity}
\label{MeasuresEquivSec}

Block sensitivity is polynomially related to several other complexity
measures of Boolean functions, which we describe in this section.

A {\em deterministic decision tree} on $n$ variables $x_1, \ldots,
x_n$ is a rooted binary tree, whose internal nodes are labeled with
variables, and the leaves are labeled $0$ or $1$. Edges are also
labeled $0$ or $1$. To evaluate such a tree on input $x$, start at the
root and query the corresponding variable, then move to the next node
along the edge labeled with the outcome of the query. Repeat until a
leaf is reached, at which point the label of the leaf is declared to
be the output of the evaluation. A decision tree computes a Boolean
function $f$ if it agrees with $f$ on all inputs. 
\begin{defn}
  The {\em deterministic decision tree complexity} of a Boolean
  function $f$, denoted by $D(f)$, is the depth of a minimum-depth
  decision tree that computes $f$.
\end{defn}

One way to extend the deterministic decision tree model is to add
randomness to the computation. In this extended model, each node $v$
has an associated bias $p_v \in [0,1]$. Evaluation proceeds as before,
except when deciding which edge to follow after querying $x_i$ at node
$v$, we follow an edge corresponding to the outcome of the query with
probability $p_v$ and the other edge with probability $1-p_v$. 

\begin{defn}
The {\em bounded-error randomized decision tree complexity} of a Boolean
function $f$, denoted by $R_2(f)$, is the depth of a minimum-depth
randomized decision tree computing $f$ with probability at least $2/3$
for all $x \in \{0,1\}^n$.
\end{defn}

A {\em certificate} of a Boolean function $f$ on input $x$, is a
subset $S \subset [n]$, such that $(\forall y \in \{0,1\}^n) (x|_S =
y|_S \Rightarrow f(x) = f(y))$. The {\em certificate complexity} of a
Boolean function $f$ on input $x$, denoted by $C(f,x)$, is the minimum
size of a certificate of $f$ on $x$.


\begin{defn} 
  The {\em certificate complexity} of a Boolean function $f$,
  also known as {\em non-deterministic decision tree complexity} and
  denoted by $C(f)$, is the maximum of $C(f, x)$ over all choices of
  $x.$
\end{defn}

\begin{defn}
A polynomial $p : \mathbb{R}^n \rightarrow \mathbb{R}$ {\em represents} $f$
if \[(\forall x \in \{0,1\}^n)(p(x) = f(x)).\] The {\em degree} of a
Boolean function $f$, denoted by $\deg(f)$,  is the degree of the unique
multilinear polynomial that represents $f.$
\end{defn}

\begin{defn}
A polynomial $p : \mathbb{R}^n \rightarrow \mathbb{R}$ {\em approximately
represents} $f$ if
\[(\forall x \in \{0, 1\}^n)(|p(x) - f(x)| < 1/3).\]
The {\em approximate degree} of a Boolean function $f$, denoted by
$\widetilde \deg (f)$, is the minimum degree of a polynomial that
approximately represents $f.$
\end{defn}

We denote the {\em quantum decision tree complexity with bounded
  error} of a Boolean function $f$ by $Q_2(f)$. Discussion of quantum
complexity is outside the scope of this note. For an introduction to
quantum complexity see a survey by de Wolf~\cite{Wolf2002}.

\begin{defn}
Complexity measures $A$ and $B$ are {\em polynomially
  related} if 
\[ (\forall f) \left[ A(f) \le \poly(B(f)) \;
  \mathrm{and} \; B(f) \le \poly(A(f)) \right] .\]
\end{defn}

\begin{thm} 
\label{PolyEquiv}
The following complexity measures of Boolean functions are all
polynomially related:

\begin{tabular}{ccccccc}
$bs(f)$, & $D(f)$,  & $R_{2}(f)$, &
$C(f)$, & $deg(f)$, &$\widetilde{deg} (f)$, &$Q_2(f).$
\end{tabular}
\end{thm}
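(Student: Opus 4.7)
The plan is to establish a web of polynomial inequalities with $bs(f)$ and $D(f)$ as the hubs, so that every pair of measures in the list is polynomially sandwiched. I would first collect the cheap half of each equivalence from one-line definitional observations, and then bound $D(f)$ polynomially in each of the remaining measures.

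On the easy side, $bs(f) \le D(f)$ (a deterministic tree must query at least one variable from each pairwise disjoint sensitive block on a worst-case input), $C(f) \le D(f)$ (any root-to-leaf path of an optimal tree gives a certificate), $\deg(f) \le D(f)$ (the tree's natural multilinear sum-over-leaves representation has degree at most its depth), $\widetilde{\deg}(f) \le \deg(f)$, $R_2(f) \le D(f)$, and $Q_2(f) \le R_2(f)$ are all immediate: each approximate/randomized/quantum model subsumes the corresponding exact/deterministic/classical one.

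The hard side stitches together four classical bounds. First, Nisan--Szegedy's inequality $bs(f) \le 6\,\widetilde{\deg}(f)^2$, proved by restricting an approximating polynomial to a univariate polynomial on $\{0,1,\ldots,bs(f)\}$ and applying a Markov/Bernstein-type bound, pulls $\widetilde{\deg}$ and hence $\deg$ into the equivalence class. Second, the Beals--Buhrman--Cleve--Mosca--de~Wolf theorem $\widetilde{\deg}(f) \le 2\,Q_2(f)$, obtained by expanding a quantum algorithm's acceptance amplitude as a low-degree polynomial in the input bits, folds $Q_2$ in. Third, Nisan's simulation $D(f) = O(R_2(f)^3)$ folds $R_2$ in. Fourth, a Blum--Impagliazzo / Hartmanis--Hemachandra style argument yields $C(f) \le s(f)\cdot bs(f) \le bs(f)^2$ together with $D(f) \le C_0(f)\cdot C_1(f) \le C(f)^2$, whence $D(f) \le bs(f)^4$. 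Polynomial composition along the $bs$--$D$ hub then closes the web.

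The main obstacle is the bound $D(f) \le C_0(f)\cdot C_1(f)$, realized by the recursive decision tree that queries the variables of a minimum $0$-certificate $S$, reads the answer off the leaf matching $S$, and recurses on all other leaves. The delicate point is arguing that on each recursive branch the surviving $1$-certificate complexity strictly drops, so that the recursion has depth $C_1(f)$ and total query count $C_0(f)\cdot C_1(f)$; this strict decrement uses the minimality of $S$ and is the one genuinely combinatorial step in the chain. Once it is in place, everything else is either a trivial inclusion or a black-box appeal to an approximation-theoretic or quantum-simulation lemma, and the final web closure is routine polynomial composition.
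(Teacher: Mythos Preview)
The paper does not actually prove Theorem~\ref{PolyEquiv}; it states the result, summarizes the known exponents in Table~\ref{tabel} with citations, and refers the reader to the Buhrman--de~Wolf survey for details. Your sketch is essentially the standard assembly of those cited results and is correct in outline: the ``easy'' inequalities are exactly the ones you list, and the four nontrivial ingredients (Nisan--Szegedy for $bs$ versus $\widetilde{\deg}$, Beals~\emph{et~al.} for $\widetilde{\deg}$ versus $Q_2$, Nisan's $bs(f)=O(R_2(f))$ feeding into $D\le\poly(bs)$, and the certificate-based bound on $D$) are the right ones.

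Two small points of friction with what the paper actually uses elsewhere. First, in the proof of Theorem~\ref{LogrankThm} the paper invokes $D(f)\le C(f)\,bs(f)$ (Beals~\emph{et~al.}) together with $C(f)\le bs(f)^2$ (Nisan) to get $D(f)\le bs(f)^3$, whereas your route via $D(f)\le C_0(f)\,C_1(f)\le C(f)^2$ yields only $D(f)\le bs(f)^4$; this is immaterial for polynomial equivalence but worth noting. Second, your description of the $D\le C_0\cdot C_1$ argument is slightly off: the correct invariant is not that ``the surviving $1$-certificate complexity of the restricted function strictly drops,'' but that each phase of querying a $0$-certificate must reveal at least one new variable of the \emph{fixed} minimum $1$-certificate of the actual input (since any $0$-certificate and any $1$-certificate must intersect), bounding the number of phases by $C_1(f)$. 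Minimality of $S$ is not what drives the decrement. This is a phrasing issue rather than a genuine gap, but as stated your ``delicate point'' would not go through.
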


Table~\ref{tabel} presents a quick summary of the known polynomial
relations between complexity measures that play a prominent role in
this note. An entry from the table shows the smallest known exponent
of a polynomial in the corresponding measure from the column that
gives an upper bound on the corresponding measure from the row, as
well as the exponent of the biggest known gap between two measures. An
entry also contains references to papers, where the result can be
found. 
References of the form [*] indicate that the result is immediate from
the definitions of complexity measures. For example, entry
3~\cite{Midrijanis04}($\log_3 6$~\cite{Nisan95}) in the second row and
third column means that $D(f)= O(\deg(f)^3)$ (see
~\cite{Midrijanis04}) and there is a Boolean function $f$, for which
$D(f) = \Omega(\deg(f)^{\log_3 6})$ (see ~\cite{Nisan95}).
For a thorough treatment of polynomial relation between various
complexity measures of Boolean functions (including variants of
quantum query complexity) see a survey by Buhrman and de
Wolf~\cite{Buhrman02}.

\begin{table}
\begin{center}
\begin{tabular}{|c|| c | c | c | c |}
\hline
 & $bs(f)$ & $D(f)$ & $\deg(f)$ & $C(f)$ \\
\hline
\hline
$bs(f)$ & 1(1) & 1~[*](1~[*]) & 2~\cite{Nisan92}($\log_3 6$~\cite{Nisan95}\footnotemark[3]) & 1~[*](1~[*]) \\
\hline
$D(f)$ & 3~\cite{Beals01,Nisan89}(2~[]\footnotemark[4]) & 1(1) & 3~\cite{Midrijanis04}($\log_3 6$~\cite{Nisan95}\footnotemark[3]) & 2~\cite{Beals01}\footnotemark[2](2~[]\footnotemark[4]) \\
\hline
$\deg(f)$ & 3~\cite{Beals01,Nisan89}(2~[]\footnotemark[4]) & 1~[*](1~[*]) & 1(1) & 2~\cite{Beals01}(2~[]\footnotemark[4]) \\
\hline
$C(f)$ & 2~\cite{Nisan89}($\log_4 5$~\cite{Bublitz86,Aaronson03}\footnotemark[1]) & 1~[*](1~[*]) & 3~\cite{Midrijanis04}($\log_3 6$~\cite{Nisan95}\footnotemark[3]) & 1(1) \\
\hline

\end{tabular}
\end{center}
\caption{Known polynomial relations between various complexity measures. An entry in the table shows the polynomial upper bound on the measure from a row in terms of a measure from a column and the biggest known gap between two measures. The references to the papers, where the corresponding results can be found, are given in square brackets.}
\label{tabel}
\end{table}

\footnotetext[1]{The construction appeared in ~\cite{Bublitz86} before
  the notion of block sensitivity was introduced. The analysis of
  $C(f)$ and $bs(f)$ of the example appears in ~\cite{Aaronson03}.}
\footnotetext[2]{The result is due
  to~\cite{Blum87,Hartmanis87,Tardos89}.}  \footnotetext[3]{The
  example is due to Kushilevitz and appears in footnote 1 on p. 560 of
  the Nisan-Wigderson paper~\cite{Nisan95}. See
  Example~\ref{Kushilevitz} in Section~\ref{ExamplesSec} of this
  paper.}  \footnotetext[4]{ These gaps are demonstrated by a commonly
  known AND-of-ORs function, see Example~\ref{AndOfOr} in
  Section~\ref{ExamplesSec} of this paper.}

Using Theorem~\ref{PolyEquiv}, one immediately obtains many equivalent
formulations of the ``sensitivity versus block sensitivity''
conjecture. The purpose of this note is to point out some nontrivial
variations on this conjecture that, to our knowledge, have not been
stated explicitly in the literature. We also propose several weaker
versions of the Sensitivity Conjecture, which might provide starting
points.

We introduce the following pictorial notation to indicate relations
between the statements appearing in this note and the Sensitivity
Conjecture.

\noindent \smiley - a consequence of the Sensitivity Conjecture.

\noindent \frowney - implies the Sensitivity Conjecture, but the
reverse implication is not known. These might be good candidates for
refutation.

\noindent \neutrey - equivalent to the Sensitivity Conjecture.

\noindent \confey - conditionally equivalent to the Sensitivity
Conjecture.









\section{Progress on the Sensitivity Conjecture}
\label{ProgressSec}

The progress on the Sensitivity Conjecture has been limited.  Simon
\cite{simon} proved that for any Boolean function that depends on all
$n$ variables, sensitivity is at least $\frac{1}{2} \log n - \frac{1}{2} \log \log n +
\frac{1}{2}$. An immediate corollary is that for any Boolean function $f$,
$bs(f) = O\left(s(f)4^{s(f)}\right)$. Kenyon and Kutin \cite{kk}
proved that sensitivity is polynomially related to $\ell$-block
sensitivity for any constant $\ell$ ($\ell$-block sensitivity
considers only the sensitive blocks of size at most $\ell).$ The best
known upper bound on block sensitivity in terms of sensitivity is
exponential and appears in the work of Kenyon and Kutin~\cite{kk} on
$\ell$-block sensitivity:
\[bs(f) \le \left(2/\sqrt{2\pi}\right) e^{s(f)} \sqrt{s(f)}.\]

In the 80s, Rubinstein~\cite{rub} exhibited a function with
sensitivity $\Theta(\sqrt{n})$ and block sensitivity $\Theta(n)$ (see
Example~\ref{RubEx} in Section~\ref{ExamplesSec}). Gaps between
sensitivity and some other complexity measures are surveyed by Buhrman
and de Wolf \cite{Buhrman02}.

In the light of Rubinstein's example, the best possible upper bound on
block sensitivity in terms of sensitivity could be quadratic. Nisan
and Szegedy \cite{Nisan92} asked the following question:

\begin{question}
Is $bs(f) = O(s(f)^2)$ for every Boolean function $f$?
\end{question}

A Boolean function $f : \{0,1\}^n \rightarrow \{0,1\}$ is {\em
  invariant under a permutation} $\sigma : [n] \rightarrow [n]$, if
for any string $x$, $f(x_1, \ldots, x_n) = f(x_{\sigma(1)},
\ldots, x_{\sigma(n)})$. The set of all permutations, under which $f$
is invariant, forms a group, called the {\em invariance group} of $f$.
A Boolean function is said to be {\em transitive} if its invariance
group $\Gamma$ is {\em transitive}, i.e., for each $i,j \in [n]$ there
is a permutation $\sigma \in \Gamma$ such that $\sigma(i) = j$. 

Tur\'an \cite{turan} proved that any property of $n$-vertex graphs
(viewed as a Boolean function on $n \choose 2$ variables) has
sensitivity $\Omega(n)$. Tur\'an asked if every transitive function on
$n$ variables has sensitivity at least $\Omega(\sqrt n)$. Chakraborty
\cite{chak} answered this question in the negative by constructing a
transitive function with sensitivity $\Theta(n^{1/3})$ and block
sensitivity $\Theta(n^{2/3})$ (see Example~\ref{ChakrabortyEx} in
Section~\ref{ExamplesSec}). We propose the following modification of
Tur\'an's question:

\begin{question}
\label{QuestionTrans}
If $f : \{0, 1\}^n \rightarrow \{0, 1\}$ is transitive and $f({\bf 0})
\neq f({\bf 1}),$ is then $s(f) = \Omega( \sqrt n )$?
\end{question}

An example of a different behavior of transitive Boolean functions
with the property $f({\bf 0}) \neq f({\bf 1})$ is due to Rivest and
Vuillemin~\cite{rv}. They proved that if $n$ is a prime power, and
$f({\bf 0})\neq f({\bf 1})$, then $D(f)=n$. From their proof it can be
immediately inferred that in fact $\deg(f)=n$.  A conjecture appearing
in the Gotsman-Linial paper~\cite{Gotsman92} states that $\deg(f) =
O(s(f)^2)$, which, if true, would answer Question~\ref{QuestionTrans}
positively for $n$ that are prime powers.


\section{Sensitivity vs Other Complexity Measures}
\label{OtherSec}
Unlike the complexity measures mentioned in
Section~\ref{MeasuresEquivSec}, the complexity measures in this
section are not polynomially related to block sensitivity and yet
proving a polynomial relation of these measures to sensitivity turns
out to be equivalent to proving a polynomial relation between block
sensitivity and sensitivity, i.e., the Sensitivity Conjecture itself.

Let $F(x, y)$ be a Boolean function.  Consider a setting in which
Alice has a Boolean string $x$ and Bob has a Boolean string $y$, and
their goal is to compute the value of $F(x, y)$ by communicating as
few bits as possible. Alice and Bob agree on a communication protocol
beforehand. Having received inputs, they communicate in accordance
with the protocol. At the end of the communication one of the parties
declares the value of the function $F$. The cost of the protocol is
the number of bits exchanged on the worst-case input.

\begin{defn}
The {\em deterministic communication complexity} of $F$, denoted by
$DC(F)$, is the cost of an optimal communication protocol computing
$F$.
\end{defn}

For more information on communication complexity see
\cite{Kushilevitz97}. Given a Boolean function on $n$ variables, we
will typically consider $F(x, y) = f(x \circ y)$ where $\circ$ is
bitwise $\wedge, \vee$ or $\oplus.$

\numberwithin{thm}{subsection}

\subsection{Log-rank vs Sensitivity}
\label{LogrankForm}

For a Boolean function $F$ of two arguments, $\rank(F)$ denotes the
rank of the corresponding matrix $M_{x,y} = F(x,y)$ over $\mathbb{R}$.

In this section we present some implications of a recent result by
Sherstov. It appears as Theorem 6.4 in~\cite{Sherstov2010Equiv}.

\begin{thm}[Sherstov]
\label{SherstovDegThm}
For every Boolean function $f$,
\[ \max\{\log \rank(f(x \wedge y)), \log \rank (f(x \vee y)) \} =
\Omega(\deg(f)).\]
\end{thm}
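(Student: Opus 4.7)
My plan is to translate both rank quantities into sparsities of the multilinear polynomial of $f$ and then prove a Möbius-style uncertainty principle relating those sparsities to $\deg f$. Writing $f(z)=\sum_{S\subseteq[n]}c_S\,z^S$ in the standard monomial basis, the substitution $z_i\mapsto x_i\wedge y_i$ gives $f(x\wedge y)=\sum_S c_S\,x^Sy^S$, exhibiting the communication matrix as $U^T\mathrm{diag}(c)V$ with $U_{S,x}=V_{S,x}=x^S$. Since the zeta matrix $[S\subseteq x]$ is unitriangular, $U$ and $V$ have full column rank on the support of $c$, so $\rank(f(x\wedge y))=sp(f):=|\{S:c_S\neq 0\}|$. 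Dually, expanding $f(z)=\sum_R d_R(1-z)^R$ and using $1-(a\vee b)=(1-a)(1-b)$, one gets $f(x\vee y)=\sum_R d_R(1-x)^R(1-y)^R$, and the same argument yields $\rank(f(x\vee y))=sp^*(f):=|\{R:d_R\neq 0\}|$.

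With these identifications it suffices to prove the product bound $sp(f)\cdot sp^*(f)\geq 2^{\deg f}$, since this gives $\max(\log sp(f),\log sp^*(f))\geq\deg(f)/2=\Omega(\deg f)$. I would establish the product bound in the more general form ``for every non-zero real polynomial $p:\{0,1\}^n\to\mathbb{R}$, $sp(p)\cdot sp^*(p)\geq 2^{\deg p}$'' by induction on $n$. The base case $n=0$ is immediate. For the inductive step, decompose $p=p_0+x_nB$ with $p_0=p|_{x_n=0}$, $p_1=p|_{x_n=1}$, $B=p_1-p_0$, and verify the identities $sp(p)=sp(p_0)+sp(B)$, $sp^*(p)=sp^*(p_1)+sp^*(B)$, and $\deg p=\max(\deg p_0,\deg B+1)=\max(\deg p_1,\deg B+1)$. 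The degenerate cases ($p_0=0$, $p_1=0$, or $B=0$) close immediately; for example if $p_0=0$ then $sp^*(p)=2\,sp^*(B)$, so $sp(p)\,sp^*(p)=2\,sp(B)\,sp^*(B)\geq 2\cdot 2^{\deg B}=2^{\deg p}$.

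In the generic case where $p_0,p_1,B$ are all non-zero I would split on which of them carries the top degree. If $\deg p_0=\deg p$, the subadditivity $sp^*(p_0)\leq sp^*(p_1)+sp^*(B)=sp^*(p)$ combined with $sp(p)\geq sp(p_0)$ transports the inductive bound from $p_0$ directly to $p$; the case $\deg p_1=\deg p$ is symmetric via $sp(p_1)\leq sp(p_0)+sp(B)=sp(p)$. Otherwise $\deg p=\deg B+1$, and the observation that the top $B$-coefficient is a difference of the corresponding $p_0$- and $p_1$-coefficients forces at least one of $\deg p_0,\deg p_1$ to equal $\deg B=\deg p-1$. Assuming WLOG $\deg p_0=\deg p-1$, I expand
\[
sp(p)\cdot sp^*(p)=(sp(p_0)+sp(B))(sp^*(p_1)+sp^*(B))
\]
and bound $sp(p_0)(sp^*(p_1)+sp^*(B))\geq sp(p_0)\,sp^*(p_0)\geq 2^{\deg p-1}$ (using subadditivity of $sp^*$) together with $sp(B)(sp^*(p_1)+sp^*(B))\geq sp(B)\,sp^*(B)\geq 2^{\deg p-1}$ (inductive hypothesis on $B$); adding these two lower bounds gives exactly $2^{\deg p}$.

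The main obstacle is the inductive case analysis: one has to identify which of the three components $p_0,p_1,B$ carries the degree, then pair the resulting inductive bound with the correct application of subadditivity of $sp$ or $sp^*$ so that the two contributions of $2^{\deg p-1}$ land on disjoint factors of the expanded product and actually add. Booleanness of $f$ plays no role --- the inequality holds for any non-zero real polynomial --- and once the bookkeeping is done, combining the product bound with the rank-equals-sparsity identifications yields the claimed $\Omega(\deg f)$ lower bound immediately.
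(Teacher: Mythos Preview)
Your argument is correct. The identifications $\rank(f(x\wedge y))=sp(f)$ and $\rank(f(x\vee y))=sp^*(f)$ via the (invertible) zeta matrix $[S\subseteq x]$ are standard, and your inductive proof of the product bound $sp(p)\cdot sp^*(p)\geq 2^{\deg p}$ goes through as written. The only point worth making explicit is that the ``WLOG $\deg p_0=\deg p-1$'' in the final subcase is justified by the involution $x_n\mapsto 1-x_n$, which swaps $p_0\leftrightarrow p_1$, $B\leftrightarrow -B$, and $sp\leftrightarrow sp^*$ while leaving both $\deg p$ and the inequality being proved invariant; alternatively one can simply redo the expansion splitting the second factor instead of the first.

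That said, there is nothing in the paper to compare against: the paper does not prove Theorem~\ref{SherstovDegThm} at all, but merely quotes it as Theorem~6.4 of~\cite{Sherstov2010Equiv} and then uses it as a black box in the proofs of Theorem~\ref{LogrankThm} and Corollary~\ref{LogrankRem}. Your write-up therefore supplies strictly more than the paper does. For context, the route you take---rank equals monomial sparsity in the appropriate basis, followed by an uncertainty-type product inequality between the two sparsities---is essentially Sherstov's own argument, so you have reconstructed the original proof rather than found a genuinely different one.
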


\begin{neutrey_con}
\label{LogrankCon}
For every Boolean function $f$, 
\[\log \rank[f(x \wedge y)] \le \poly(s(f)).\]
\end{neutrey_con}

\begin{thm} 
\label{LogrankThm}
Sensitivity Conjecture $\iff$ Conjecture~\ref{LogrankCon}.
\end{thm}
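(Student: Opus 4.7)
The plan is to prove the two implications separately. For the forward direction, I would start from the multilinear expansion
\[ f(x \wedge y) = \sum_{S \subseteq [n]} \hat f(S)\, \Bigl(\prod_{i \in S} x_i\Bigr) \Bigl(\prod_{i \in S} y_i\Bigr), \]
where $\hat f(S)$ are the coefficients of the unique multilinear polynomial representing $f$. Each summand is a rank-one matrix, so $\rank(f(x \wedge y))$ is at most the sparsity of $f$, which is bounded by $\binom{n}{\le \deg(f)}$; hence $\log \rank(f(x \wedge y)) = O(\deg(f) \log n)$. Since irrelevant coordinates only duplicate rows and columns, the rank does not change when $f$ is restricted to the variables it actually depends on, so I may assume that $f$ depends on all $n$ variables. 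Simon's lower bound then yields $\log n = O(s(f))$, and combined with $\deg(f) \le \poly(bs(f))$ from Theorem~\ref{PolyEquiv} and the Sensitivity Conjecture, one obtains $\log \rank(f(x \wedge y)) \le \poly(s(f))$.

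For the reverse direction, the main tool is Sherstov's Theorem~\ref{SherstovDegThm}, which controls $\deg(f)$ by the larger of $\log \rank(f(x \wedge y))$ and $\log \rank(f(x \vee y))$. Conjecture~\ref{LogrankCon} directly handles the first quantity; to bound the second I would use a De Morgan substitution. Defining $g(z) = f(\bar z)$, one has $g(\bar x \wedge \bar y) = f(x \vee y)$, so the corresponding matrices differ only by a permutation of rows and columns and hence have the same rank. Since sensitivity is invariant under complementation of inputs, $s(g) = s(f)$, and applying Conjecture~\ref{LogrankCon} to $g$ gives $\log \rank(f(x \vee y)) \le \poly(s(f))$. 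Sherstov's theorem now yields $\deg(f) \le \poly(s(f))$, and combining with $bs(f) \le \poly(\deg(f))$ from Theorem~\ref{PolyEquiv} delivers the Sensitivity Conjecture.

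The only subtlety I anticipate is the extra $\log n$ factor appearing in the forward direction: without first reducing to the case that $f$ depends on all its variables, this factor is not a priori bounded by any function of $s(f)$, and Simon's logarithmic lower bound on sensitivity is precisely what closes this gap. Every other step is a black-box invocation of results already recorded in the excerpt, so the conceptual content of the proof is the symmetry between the $\wedge$ and $\vee$ sides exploited through the substitution $g(z) = f(\bar z)$.
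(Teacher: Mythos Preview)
Your reverse direction is essentially identical to the paper's: both complement the input to pass from $\vee$ to $\wedge$, apply Conjecture~\ref{LogrankCon} twice, invoke Sherstov's Theorem~\ref{SherstovDegThm}, and finish via $bs(f) \le \poly(\deg(f))$.

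Your forward direction is correct but takes a different route. The paper bounds $\log\rank$ through communication and decision-tree complexity:
\[
\log\rank(f(x\wedge y)) \le DC(f(x\wedge y)) \le 2D(f) \le 2\,bs(f)^3 \le \poly(s(f)),
\]
using Mehlhorn--Schmidt, the trivial simulation of a decision tree by a protocol, and $D(f)\le bs(f)^3$. This avoids any reference to the ambient dimension $n$. Your argument instead bounds the rank by the monomial sparsity of $f$, giving $\log\rank(f(x\wedge y)) = O(\deg(f)\log n)$, and then has to eliminate the $\log n$ factor by first restricting to the relevant variables and invoking Simon's lower bound $s(f)=\Omega(\log n)$. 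This works, but it introduces an extra black box (Simon's theorem) and an extra reduction step that the paper's approach does not need; the paper's chain of inequalities is shorter and stays entirely within the circle of measures from Theorem~\ref{PolyEquiv}.
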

\begin{proof} \ 

  $\Rightarrow$ Beals {\it et al.}~\cite{Beals01} showed that $D(f) \le
  C(f)bs(f)$ and Nisan~\cite{Nisan89} proved that $C(f) \le
  bs(f)^2$. Therefore, we get  $D(f) \le bs(f)^3$ as a simple corollary.
  It is easy to see that $DC(f(x \wedge y)) \le 2D(f)$.
  Finally, by a classical result in communication complexity due to
  Mehlhorn and Schmidt~\cite{Mehlhorn82}, $(\forall F)(\log
  \rank(F(x,y))) \le DC(F)))$.


$\Leftarrow$ For a Boolean function $f$, define $g(x) = f( \lnot
  x)$. Clearly, $s(g) = s(f)$ and $\log \rank(g(x \wedge y)) = \log
  \rank(f(x \vee y))$.  Applying the hypothesis to both $g$ and $f$,
  we get that
  \[ \max\{\log \rank(f(x \vee y)) ,\log \rank(f(x \wedge
  y)) \} \le \poly(s(f)). \]
  It follows that $\deg(f) \le \poly(s(f))$ by
  Theorem~\ref{SherstovDegThm}. This completes the proof, since $bs(f)
  \le 2 \deg(f)^2$~\cite{Nisan92}.
\end{proof}

\begin{neutrey_con}
\label{ParityLogrankCon}
For every Boolean function $f$, 
\[\log \rank[f(x \oplus y)] \le \poly(s(f)).\]
\end{neutrey_con}

\begin{cor}
\label{LogrankRem}
Sensitivity Conjecture $\iff$ Conjecture~\ref{ParityLogrankCon}.
\end{cor}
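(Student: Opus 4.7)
The plan is to mirror the proof of Theorem~\ref{LogrankThm} for the forward direction, and for the reverse direction to reduce Conjecture~\ref{ParityLogrankCon} to Conjecture~\ref{LogrankCon} via a simple padding trick, after which Theorem~\ref{LogrankThm} will finish the argument.

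For the $\Rightarrow$ direction, I first recall the chain $D(f) \le C(f)\,bs(f) \le bs(f)^3$ already used in the proof of Theorem~\ref{LogrankThm}, so that under the Sensitivity Conjecture $D(f) \le \poly(s(f))$. Next I will observe that Alice and Bob can simulate an optimal decision tree for $f$ on the input $x \oplus y$ by exchanging both $x_i$ and $y_i$ whenever the $i$-th coordinate is queried, which gives $DC(f(x \oplus y)) \le 2 D(f)$. The Mehlhorn--Schmidt bound $\log\rank(F) \le DC(F)$ then yields $\log\rank(f(x \oplus y)) \le \poly(s(f))$, which is Conjecture~\ref{ParityLogrankCon}.

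For the $\Leftarrow$ direction, given $f$ on $n$ bits, I will introduce the $2n$-variable function $g(u,v) := f(u \wedge v)$. A short case analysis shows $s(g) \le 2 s(f)$: a coordinate of $(u,v)$ is sensitive for $g$ only when flipping it actually flips the corresponding bit of $u \wedge v$, and that bit must then be sensitive for $f$ at $u \wedge v$. I will then observe that the XOR communication matrix of $g$ contains the AND matrix $f(x \wedge y)$ of $f$ as the submatrix indexed by $\alpha = (x, 0^n)$ and $\beta = (0^n, y)$, since $\alpha \oplus \beta = (x, y)$ and $g(x,y) = f(x \wedge y)$, so $\rank(f(x \wedge y)) \le \rank(g(\alpha \oplus \beta))$. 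Applying Conjecture~\ref{ParityLogrankCon} to $g$ then yields $\log\rank(f(x \wedge y)) \le \poly(s(g)) \le \poly(s(f))$, i.e.\ Conjecture~\ref{LogrankCon}, and Theorem~\ref{LogrankThm} delivers the Sensitivity Conjecture.

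The conceptual crux is the reverse direction: the XOR-rank of $f$ itself can be exponentially smaller than its AND-rank (for instance, parity on $n$ bits has sparsity $2$ while its AND communication matrix is essentially Hadamard of rank $2^n$), so Sherstov's Theorem~\ref{SherstovDegThm} cannot be applied to $f(x \oplus y)$ directly. Padding to $g(u,v) = f(u \wedge v)$ sidesteps this obstruction because $g$'s XOR communication matrix is rich enough to embed the AND matrix of $f$, after which the already-proven Theorem~\ref{LogrankThm} closes the loop.
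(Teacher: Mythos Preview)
Your proof is correct and follows essentially the same route as the paper: the forward direction is identical, and for the reverse direction the paper also passes to $F(x,y)=f(x\wedge y)$, notes $s(f)\le s(F)\le 2s(f)$, observes $\rank(f(x\wedge y))\le \rank(F((x,y)\oplus(x',y')))$, and then invokes Theorem~\ref{LogrankThm}. Your version is simply more explicit about the submatrix embedding (via rows $(x,0^n)$ and columns $(0^n,y)$) and adds a helpful remark on why one cannot apply Theorem~\ref{SherstovDegThm} to $f(x\oplus y)$ directly.
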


\begin{proof} \ 

$\Rightarrow$ Similar to the same direction in Theorem~\ref{LogrankThm}.

$\Leftarrow$ For a Boolean function $f$, define $F(x,y) = f(x \wedge
  y)$. It is easy to check that $s(f) \leq s(F) \leq 2 s(f),$ and also
  that 
\[\rank(f(x \wedge y)) \le \rank(F(x \oplus x',
 y \oplus y')).\] 
The result now follows from Theorem~\ref{LogrankThm}.
\end{proof}

\begin{defn}
The {\em sign-rank} of Boolean function $F$ of two arguments, denoted by
$\rank_\pm$, is defined as
\[\rank_\pm (F) = \min_M \left\{ \rank\left( M \right) ~ \left| ~ \left( \forall x,y\right) ~\left((-1)^{F(x,y)} M_{x,y}  > 0 \right) \right\}\right. .\]
\end{defn}
The notion of sign-rank was introduced by Paturi and
Simon~\cite{Paturi86} to give a characterization of the
unbounded error probabilistic communication complexity.


Since $\rank_\pm (F) \le \rank(F)$ for every $F$, we propose a
possibly weaker version of Conjecture~\ref{ParityLogrankCon} stated
for the sign-rank.

\begin{smiley_con}
\label{PMRankCon}
For every Boolean function $f$, 
\[\log \rank_\pm(f(x \oplus y)) \le \poly(s(f)).\]
\end{smiley_con}

\begin{question}
Does Conjecture~\ref{PMRankCon} imply
Conjecture~\ref{ParityLogrankCon}? I.e., is Conjecture~\ref{PMRankCon}
equivalent to the Sensitivity Conjecture?
\end{question}

\subsection{Parity Decision Trees}
{\em Parity decision trees} are similar to decision trees; the
difference is that instead of querying only one variable at a time,
one may query the sum modulo $2$ of an arbitrary subset of variables
(see~\cite{Zhang2010} for a brief introduction to parity decision
trees). The parity decision tree complexity of a Boolean function $f$
is denoted by $D_\oplus(f)$. Obviously, $D_\oplus(f) \le D(f)$. In
this section, we explore the relationship between $D_\oplus$ and
sensitivity. Note that parity decision trees are strictly more
powerful than decision trees. For instance, parity of $n$ bits
requires a decision tree of depth $n$ whereas a parity decision tree
of depth $1$ suffices.
\begin{neutrey_con}
\label{ParityCon}
For every Boolean function $f$, $D_\oplus (f) \le \poly(s(f))$.
\end{neutrey_con}

This seemingly weaker conjecture is actually equivalent to the
Sensitivity Conjecture.

\begin{thm}
\label{ParityThm}
Sensitivity Conjecture $\iff$ Conjecture~\ref{ParityCon}.
\end{thm}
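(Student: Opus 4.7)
The plan is to reduce Conjecture~\ref{ParityCon} to Conjecture~\ref{ParityLogrankCon} via a standard simulation of parity decision trees by communication protocols, and then invoke Corollary~\ref{LogrankRem} for the equivalence to the Sensitivity Conjecture.

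For the forward direction ($\Rightarrow$), I would simply chain known inequalities. Since $D_\oplus(f) \le D(f)$ by definition (a deterministic decision tree is a parity decision tree that queries only singleton subsets), and since $D(f) \le C(f) \cdot bs(f) \le bs(f)^3$ by the results of Beals et al.~\cite{Beals01} and Nisan~\cite{Nisan89} already used in Theorem~\ref{LogrankThm}, assuming the Sensitivity Conjecture gives $D_\oplus(f) \le D(f) \le bs(f)^3 \le \poly(s(f))$.

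For the reverse direction ($\Leftarrow$), the key observation is that a parity decision tree for $f$ of depth $d$ yields a deterministic communication protocol for $F(x,y) = f(x \oplus y)$ of cost $2d$: to simulate a query of $\bigoplus_{i \in S} (x \oplus y)_i$, Alice sends the bit $\bigoplus_{i \in S} x_i$ and Bob sends $\bigoplus_{i \in S} y_i$, and both parties XOR the two bits to learn the answer and descend in the tree. Hence $DC(f(x \oplus y)) \le 2 D_\oplus(f)$. Combining this with the Mehlhorn--Schmidt bound~\cite{Mehlhorn82}, one gets
\[\log \rank(f(x \oplus y)) \le DC(f(x \oplus y)) \le 2 D_\oplus(f).\]
Assuming Conjecture~\ref{ParityCon}, the right-hand side is at most $\poly(s(f))$, which is exactly Conjecture~\ref{ParityLogrankCon}. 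Then Corollary~\ref{LogrankRem} closes the argument.

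There is no real obstacle: both directions are mechanical consequences of results already assembled in the excerpt (the $D \le bs^3$ bound, the parity-tree-to-protocol simulation, Mehlhorn--Schmidt, Sherstov's Theorem~\ref{SherstovDegThm} and the $bs \le 2 \deg^2$ bound packaged inside Corollary~\ref{LogrankRem}). The only step requiring a new idea relative to Theorem~\ref{LogrankThm} is the simulation of a parity decision tree by a communication protocol for $f(x \oplus y)$, and its role is precisely to produce a bound on $\log \rank$ over $\oplus$ rather than $\wedge$, so that Corollary~\ref{LogrankRem} can be invoked rather than Theorem~\ref{LogrankThm}.
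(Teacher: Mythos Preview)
Your proposal is correct and follows exactly the paper's argument: the forward direction is the chain $D_\oplus(f)\le D(f)\le bs(f)^3$, and the reverse direction uses $\log\rank(f(x\oplus y))\le DC(f(x\oplus y))\le 2D_\oplus(f)$ together with Corollary~\ref{LogrankRem}. The only difference is that you spell out the parity-tree-to-protocol simulation and the Mehlhorn--Schmidt step, which the paper leaves implicit.
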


\begin{proof} \ 

$\Rightarrow$  $D_\oplus (f) \le D(f) \le bs(f)^3$.

$\Leftarrow$ Since $\log \rank(f(x
  \oplus y)) \le DC(f(x \oplus y)) \le 2 D_\oplus (f),$ 
  the proof follows from Corollary~\ref{LogrankRem}.
\end{proof}

Next we present a quadratic gap between $D_\oplus$ and
sensitivity. Consider the Boolean function $h({\bf
  x})=\bigwedge_{i=1}^{2^k} \bigvee_{j=1}^{2^k} x_{ij},$ on $n=2^{2k}$
variables. Clearly, $s(h) = 2^k = \sqrt{n}$. To see that $D_\oplus(h)
= n$, consider the mod $2$ degree of $h$ defined as:

\begin{defn}
The \emph{mod $2$ degree} of a Boolean funtion $f$, denoted by $\deg_\oplus (f)$, is the degree
of the unique multilinear polynomial over $\mathbb{F}_2$ (the field of two elements)
that represents $f.$
\end{defn}
Observe that the OR function (and consequently the AND function) has full mod $2$ degree. It follows that $h$ has full mod $2$ degree, which shows that $D_\oplus(h) = n$, since for any Boolean function $f$, $\deg_\oplus(f) \le D_\oplus(f)$.



Similar to the question stated in the survey by Buhrman and de Wolf~\cite{Buhrman02} whether $D(f) \le O(bs(f)^2)$,
we ask the following:

\begin{question}
\label{ParityQ}
Is $D_\oplus (f) = O(bs(f)^2)$?
\end{question}
\begin{rem}
  A positive answer to the above question would imply that \\
  $\deg(f)~\leq~bs(f)^2,$ improving the current best known bound
  $\deg(f) \leq bs(f)^3$ (see~\cite{Beals01}).
\end{rem}



\subsection{Analytic Setting}
\label{AnalSecs}

In the previous sections, we considered Boolean functions from
$\{0,1\}^n$ to $\{0,1\}$. For the purpose of studying the Fourier
spectrum of Boolean functions, it is convenient to use range $\{+1,
-1\}$, replacing $0$ with $+1$ and $1$ with $-1$. This operation
preserves the complexity measures up to an additive constant. For a
brief introduction to Fourier Analysis on the Boolean cube, see, for
instance, the survey by de Wolf~\cite{Wolf2008}.

\begin{defn}
For $S \subseteq [n]$, the {\em character} $\chi_S$ is defined as
\[ \chi_S(x) = (-1)^{\sum_{i \in S} x_i}. \]
\end{defn}

\begin{defn}
The {\em Fourier coefficient} of $f$ corresponding
to $S$ is defined as
 \[ \widehat{f}(S) := \mathop{\mathbb E}_{x \in \{0, 1\}^n} \left[f(x) \chi_S(x) \right].\]
\end{defn}

\begin{neutrey_con}
\label{MinFourierCon}
For every Boolean function $f$, 
\[\min_{S: \widehat{f}(S) \neq 0}
| \widehat{f}(S) | \geq 2^{- \poly(s(f))}.\]
\end{neutrey_con}

\begin{thm}
\label{MinFourierThm}
Sensitivity Conjecture $\iff$ Conjecture~\ref{MinFourierCon}.
\end{thm}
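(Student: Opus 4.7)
The plan is to handle the two implications by routing through complexity measures already shown to be equivalent to $s(f)$ under the Sensitivity Conjecture: for $\Rightarrow$ via the deterministic decision tree complexity $D(f)$, and for $\Leftarrow$ via Conjecture~\ref{ParityLogrankCon} and Corollary~\ref{LogrankRem}.

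For the direction $\Rightarrow$, the Sensitivity Conjecture combined with $D(f) \le C(f) \cdot bs(f) \le bs(f)^3$ (exactly as used in the proof of Theorem~\ref{LogrankThm}) gives $D(f) \le \poly(s(f))$. The key auxiliary fact I would invoke is the folklore statement that if $f$ admits a depth-$d$ decision tree, then every Fourier coefficient $\widehat{f}(S)$ is an integer multiple of $2^{-d}$. I would prove this by expanding the tree as a sum over its leaves: each leaf contributes $\pm\prod_{i \in P}\frac{1 \pm \chi_{\{i\}}(x)}{2}$, where $P$ is the query path of length $\le d$, so each leaf polynomial lies in $\mathbb{Z}[1/2]$ with denominator dividing $2^d$. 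Summing over leaves and then pairing with $\chi_S$ under $\mathbb{E}_x$ preserves this granularity, so every nonzero $|\widehat{f}(S)|$ is at least $2^{-d} \ge 2^{-\poly(s(f))}$.

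For the direction $\Leftarrow$, I would exploit the classical fact that the matrix $M_{x,y} = f(x \oplus y)$ is simultaneously diagonalized by the characters: expanding $f(x \oplus y) = \sum_S \widehat{f}(S)\chi_S(x)\chi_S(y)$ writes $M$ as $\sum_S \widehat{f}(S)\, v_S v_S^\top$ with pairwise orthogonal vectors $v_S$ given by $(\chi_S(x))_x$, whence $\rank(f(x \oplus y))$ equals the number $N$ of nonzero Fourier coefficients of $f$. Setting $\epsilon = \min_{S:\widehat{f}(S)\ne 0}|\widehat{f}(S)|$, Parseval's identity $\sum_S \widehat{f}(S)^2 = 1$ yields $N \le 1/\epsilon^2$. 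Under Conjecture~\ref{MinFourierCon} we have $\epsilon \ge 2^{-\poly(s(f))}$, so $\log \rank(f(x \oplus y)) \le 2\log(1/\epsilon) \le \poly(s(f))$. This is precisely Conjecture~\ref{ParityLogrankCon}, and Corollary~\ref{LogrankRem} then delivers the Sensitivity Conjecture.

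Neither direction has a substantial obstacle, but the step most worth double-checking is the decision-tree-to-Fourier granularity lemma in the forward direction: one must be careful about the paper's $\{0,1\}$-input, $\{\pm 1\}$-output convention and verify that variables not queried along any root-to-leaf path do not spoil the $2^{-d}$ bound. Beyond that bookkeeping, the argument rests on two well-known ingredients—the Fourier diagonalization of $\oplus$-convolution matrices and the dyadic granularity of decision-tree polynomials—combined with the equivalences already established in Section~\ref{LogrankForm}.
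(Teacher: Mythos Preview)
Your proposal is correct and follows essentially the same route as the paper: the forward direction uses $D(f)\le bs(f)^3$ together with the dyadic granularity of Fourier coefficients coming from a depth-$d$ decision tree, and the backward direction bounds the number of nonzero Fourier coefficients via Parseval, identifies this count with $\rank(f(x\oplus y))$ through the character diagonalization, and then invokes Corollary~\ref{LogrankRem}. The only differences are cosmetic---you sketch the granularity lemma explicitly and phrase the diagonalization as a rank-one sum rather than as an eigenvector computation---but the underlying argument is the same.
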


\begin{proof}\ 

  $\Rightarrow$ It is easy to see that if $f$ has a decision tree of
  depth $d$ then all non-zero Fourier coefficients are integer
  multiples of $2^{-d}$. The result follows from $D(f) \le bs(f)^3$,
  as stated in the proof of Theorem~\ref{LogrankThm}.

$\Leftarrow$ Let $\alpha = \min_{S:\widehat{f}(S) \neq 0}
  |\widehat{f}(S)|.$ Since $\sum_S \widehat{f}(S)^2 = 1$ (Parseval's
  Identity), the number of non-zero Fourier coefficients is at most
  $\alpha^{-2}$. Consider matrix $M$ with entries $M_{x,y} = f(x \oplus
  y)$. It is easy to check that for each $S \subset [n]$, the vector
  $(\chi_S(y))_{y \in \{0,1\}^n}$ is an eigenvector to $M$ with a
  corresponding eigenvalue $2^n \widehat{f}(S)$. Since the $\chi_S$
  form an orthogonal set of vectors, the $2^n \widehat{f}(S)$ are all
  the eigenvalues of $M$.

  Hence, $\alpha \geq 2^{- \poly(s(f))} $ implies that rank of $f(x
  \oplus y)$ is at most $2^{\poly(s(f))}.$ The proof is complete by
  Corollary~\ref{LogrankRem}.
\end{proof}

The following consequence of the Sensitivity Conjecture appears to be
open.

\begin{smiley_confey_con}
\label{SumFourierCon}
For every Boolean function $f$, \[ \sum_S | \widehat{f}(S) | \le
2^{\poly(s(f))}. \]
\end{smiley_confey_con}

\begin{defn}
  Let $F(x, y)$ be a Boolean function. Suppose Alice has a Boolean
  string $x$ and Bob has a Boolean string $y.$ The {\em bounded-error
    randomized communication complexity with} {\em shared randomness}
  of $F$, denoted by $RC_2(F)$, is the least cost of a randomized
  protocol that computes $F$ correctly with probability at least $2/3$
  on every input, when Alice and Bob are given the same random bits.
\end{defn}


Next we prove that Conjecture~\ref{SumFourierCon} is equivalent to the
Sensitivity Conjecture under the following variant of the Log-rank
Conjecture due to Grolmusz~\cite{Grolmusz97}.

\begin{con}
[Grolmusz] 
Let $F: \{0,1\}^{m+n} \rightarrow \{-1, 1\}.$
Suppose Alice has $x \in \{0,1\}^m$ and Bob has $y \in \{0,1\}^n,$ then:
\[ RC_2(F(x, y)) \leq \poly (\log \sum_{S \subseteq [m + n]} |\widehat{F}(S)|) .\]
\label{GrolCon}
\end{con}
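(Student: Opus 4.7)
The natural plan is to build a randomized protocol directly from the Fourier expansion $F(x,y) = \sum_S \widehat{F}(S) \chi_S(x,y)$, exploiting the fact that every character factors as $\chi_S(x,y) = \chi_{S_A}(x) \cdot \chi_{S_B}(y)$ across the partition of inputs between Alice and Bob, where $S_A = S \cap [m]$ and $S_B = S \setminus [m]$. Writing $L = \sum_S |\widehat{F}(S)|$ for the spectral norm, the two players use the shared randomness to sample i.i.d.\ sets $S_1, \dots, S_k$ from the importance-sampling distribution $p(S) = |\widehat{F}(S)|/L$; for each $S_i$ they exchange one bit each to compute $Z_i = L \cdot \mathrm{sgn}(\widehat{F}(S_i)) \cdot \chi_{S_i}(x,y) \in \{-L,+L\}$. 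A direct calculation gives $\mathbb{E}[Z_i] = F(x,y)$, so $(1/k)\sum_i Z_i$ is an unbiased estimator, and the protocol outputs $\mathrm{sgn}\bigl(\sum_i Z_i\bigr)$ after $O(k)$ bits of communication.

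Each $Z_i$ has variance at most $L^2$, so Chebyshev (or Hoeffding via $|Z_i| = L$) yields bounded error only at $k = \Theta(L^2)$, giving $RC_2(F) = O(L^2)$. This is polynomial in $L$, not in $\log L$, so the entire substance of the conjecture is in closing this exponential gap. Two plausible routes suggest themselves: (a) \emph{sparsification}: approximate $F$ in a norm compatible with communication by a function whose Fourier support has size $\poly(\log L)$, and then execute a deterministic low-cost protocol on the sparse approximation; or (b) \emph{variance reduction}: replace the naive average by an estimator whose effective variance is $\poly(\log L)$ rather than $L^2$, perhaps by exploiting the Boolean range of $F$, by bounding higher moments of $Z_i$, or by correlating the samples $S_i$ so as to cancel cross-terms in the Fourier sum.

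The main obstacle is that neither (a) nor (b) is supported by any existing Fourier-analytic machinery. Standard spectral-concentration theorems -- KKL, Friedgut's junta theorem, Bourgain's sharp-threshold results -- are phrased in terms of influences, not $\|\widehat{F}\|_1$, and no known theorem converts a small $L$ into a $\poly(\log L)$-sparse approximation of $F$ in any useful norm. The XOR-function special case $F(x,y) = f(x \oplus y)$, for which $\|\widehat{F}\|_1 = \|\widehat{f}\|_1$ and the protocol question reduces to query complexity, already specializes the conjecture to a strengthening of the Log-rank Conjecture for XOR functions, which is tightly linked to the Polynomial Freiman-Ruzsa conjecture in additive combinatorics. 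Any successful proof thus appears to require either progress on PFR-type structural theorems or a genuinely new concentration phenomenon for protocols whose per-sample range is not controlled by the spectral norm -- both of which seem well beyond current technique.
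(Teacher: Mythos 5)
You have correctly recognized that the labeled statement is a \emph{conjecture} (attributed to Grolmusz~\cite{Grolmusz97}), not a theorem: the paper offers no proof of it and uses it only as a hypothesis in the conditional equivalence ``Conjecture~\ref{GrolCon} $\Rightarrow$ (Sensitivity Conjecture $\iff$ Conjecture~\ref{SumFourierCon}).'' So there is no proof in the paper to compare against, and an honest answer is exactly what you gave: describe what is known, and where the open gap lies.

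Your account of what is known is accurate. The importance-sampling protocol you sketch --- sample $S$ with probability $|\widehat{F}(S)|/L$, have each party send the one-bit value of its half of $\chi_S$, and average the unbiased estimates $Z_i = L \cdot \mathrm{sgn}(\widehat{F}(S_i))\,\chi_{S_i}(x,y)$ --- is essentially Grolmusz's own theorem, giving $RC_2(F) = O(L^2)$ with $L = \sum_S |\widehat{F}(S)|$; the conjecture is precisely the exponentially stronger $\poly(\log L)$ bound. Your remarks on why this is hard are also on target: in the XOR-function case $F(x,y) = f(x\oplus y)$, the conjecture sits alongside the Log-rank Conjecture for XOR functions and connects to additive-combinatorial structure theorems of Freiman--Ruzsa type, and no current Fourier-concentration machinery (KKL, Friedgut, Bourgain) converts small spectral norm into a polylog-sparse approximation in a protocol-compatible norm. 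In short: you have not proved the statement, and neither has anyone else; your write-up correctly identifies the state of the art and the obstacle, which is the right thing to say about a statement the source paper itself leaves open.
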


To prove the equivalence we will need the following result by
Sherstov(see ~\cite{Sherstov2010Equiv}, Theorem 5.1).
\begin{thm}
[Sherstov]
\label{SherstovThm}
Let $F_1(x, y) := f(x \wedge y)$ and $F_2(x, y) := f(x \vee y),$ then
\[\max\{
RC_2(F_1), RC_2(F_2)= \Omega(bs(f)^{1/4}).\]
\label{RC}
\end{thm}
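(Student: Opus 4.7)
The plan is to prove a randomized/approximate analog of Theorem~\ref{SherstovDegThm}, whose statement the target closely mirrors. I would first reduce to approximate degree via the Nisan--Szegedy bound $\widetilde{\deg}(f) = \Omega(\sqrt{bs(f)})$, so that it suffices to establish $\max\{RC_2(F_1), RC_2(F_2)\} = \Omega(\sqrt{\widetilde{\deg}(f)})$; the square-root loss inherent in translating an approximate-degree bound into a randomized communication lower bound is what ultimately produces the exponent $1/4$ rather than $1/2$.

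For the main step I would use the dual formulation of approximate degree: setting $d := \widetilde{\deg}(f)$, there is a signed measure $\phi : \{0,1\}^n \to \mathbb{R}$ with $\|\phi\|_1 = 1$, orthogonal to every polynomial of degree less than $d$, and satisfying $\langle \phi, f\rangle \ge 1/3$. The goal is to lift $\phi$ to a signed matrix $\Psi$ on $\{0,1\}^n \times \{0,1\}^n$ that serves as a generalized-discrepancy witness for one of $F_1, F_2$. Mirroring the AND/OR dichotomy used in the proof of Theorem~\ref{SherstovDegThm}, I would spread each atom $\phi(z)$ over the AND-preimages $\{(x,y): x \wedge y = z\}$ to form $\Psi_1$, and over the OR-preimages $\{(x,y): x \vee y = z\}$ to form $\Psi_2$. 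The orthogonality of $\phi$ to low-degree polynomials should force the operator norm of either $\Psi_1$ or $\Psi_2$ to be small, while $\langle \phi, f\rangle \ge 1/3$ guarantees $\Omega(1)$ correlation of the chosen $\Psi_i$ with $F_i$. Plugging this into the standard discrepancy bound $RC_2(F) = \Omega(\log(1/\mathrm{disc}(F)))$ yields the desired $\Omega(\sqrt{d})$ randomized communication lower bound for at least one of $F_1, F_2$, which combined with step one gives $\Omega(bs(f)^{1/4})$.

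The main obstacle is the lifting step. In Theorem~\ref{SherstovDegThm} one can argue purely via ranks and Fourier coefficients of $f(x \wedge y)$ viewed as functions of $x$, since exact degree converts to exact rank. In the approximate setting one instead needs a robust signed-measure construction that simultaneously controls the operator norm and preserves large correlation with $F_i$ after the AND or OR averaging. Arguing that the AND/OR dichotomy must break in favor of \emph{at least one} of the two compositions, with only a polynomial loss, is the technical heart of the argument and where I would expect to spend most of the effort; it is plausible that the dual polynomial $\phi$, when symmetrized against both compositions, forces such a dichotomy by a Fourier-analytic pigeonhole similar in spirit to Sherstov's exact-degree proof.
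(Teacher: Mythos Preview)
The paper does not give its own proof of this theorem; it is stated with attribution to Sherstov (citing \cite{Sherstov2010Equiv}, Theorem~5.1) and then used as a black box in the subsequent equivalence argument. So there is nothing in the paper to compare your attempt against.

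That said, your outline is broadly the right shape for how Sherstov actually proves the result in the cited work: reduce via the Nisan--Szegedy inequality $\widetilde{\deg}(f) = \Omega(\sqrt{bs(f)})$, take a dual witness $\phi$ for approximate degree, and lift it through the AND/OR composition to a generalized-discrepancy witness, yielding a randomized communication lower bound for at least one of $F_1, F_2$. The AND/OR dichotomy you anticipate is indeed the crux, and your instinct that it is handled by a Fourier-analytic argument on how the dual polynomial's mass distributes across Hamming levels is correct. Where your sketch is vaguest --- controlling the spectral norm of the lifted matrix $\Psi_i$ using only the orthogonality of $\phi$ to low-degree polynomials --- is exactly where the real technical work lies in Sherstov's proof, and it is not a routine step; it requires a careful choice of how to spread $\phi(z)$ over preimages so that the resulting matrix has a tractable spectral decomposition. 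Your plan identifies the right obstacle but does not yet resolve it.
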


\begin{thm} Conjecture~\ref{GrolCon} $\Rightarrow$ (Sensitivity
  Conjecture $\iff$ Conjecture~\ref{SumFourierCon}).
\end{thm}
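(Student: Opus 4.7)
The plan is to split the equivalence into its two directions, invoking Grolmusz's Conjecture~\ref{GrolCon} only in the nontrivial direction.

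For the easy direction, Sensitivity Conjecture $\Rightarrow$ Conjecture~\ref{SumFourierCon}, Grolmusz's Conjecture is not needed. I would use the standard inequality $\sum_S |\widehat{f}(S)| \le 2^{D(f)}$, obtained by writing $f$ as a sum over the at most $2^{D(f)}$ leaves of an optimal decision tree, each contributing a subcube indicator of Fourier $L_1$-norm exactly $1$. Combining this with $D(f) \le bs(f)^3$, recorded inside the proof of Theorem~\ref{LogrankThm}, and with the Sensitivity Conjecture gives $\sum_S |\widehat{f}(S)| \le 2^{\poly(s(f))}$.

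For the harder direction, the key move is to apply Conjecture~\ref{SumFourierCon} not to $f$ itself but to the bipartite lifts $F_1(x,y) := f(x \wedge y)$ and $F_2(x,y) := f(x \vee y)$. First I would verify the combinatorial estimate
\[ s(F_1) \le 2 s(f) \quad \text{and} \quad s(F_2) \le 2 s(f). \]
For $F_1$: at $(x,y)$ with $z = x \wedge y$, flipping $x_i$ alters $z$ only in coordinate $i$, and only when $y_i = 1$, so every bit sensitive for $f$ at $z$ spawns at most two sensitive coordinates of $F_1$ (an $x$-side one when $y_i = 1$ and a $y$-side one when $x_i = 1$); the case of $F_2$ is symmetric, using that flipping $x_i$ changes $x \vee y$ only when $y_i = 0$. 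Conjecture~\ref{SumFourierCon} applied to each $F_i$ then gives $\sum_S |\widehat{F_i}(S)| \le 2^{\poly(s(F_i))} \le 2^{\poly(s(f))}$. Conjecture~\ref{GrolCon} converts this to $RC_2(F_i) \le \poly(s(f))$, and Theorem~\ref{SherstovThm} supplies the matching lower bound $\max\{RC_2(F_1), RC_2(F_2)\} = \Omega(bs(f)^{1/4})$. Chaining these three facts yields $bs(f) \le \poly(s(f))$, the Sensitivity Conjecture.

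The main obstacle, in my view, is deciding what to feed into Conjecture~\ref{SumFourierCon}. Applying it to $f$ directly and then invoking Grolmusz on $f(x \wedge y)$ would not close the loop, because the naive bound on $\sum_S |\widehat{f(x \wedge y)}(S)|$ in terms of $\sum_S |\widehat{f}(S)|$ blows up by an uncontrolled factor of $2^{\deg(f)}$ when one expands $\chi_T(x \wedge y) = \prod_{i \in T}(1 - 2 x_i y_i)$ in the Fourier basis. The insight that rescues the argument is to apply Conjecture~\ref{SumFourierCon} instead to the already-composed functions $F_1, F_2$, using the painless estimate $s(F_i) \le 2s(f)$ to push the hypothesis through the gadget composition before Grolmusz and Sherstov are invoked.
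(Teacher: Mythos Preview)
Your proof is correct and matches the paper's argument almost exactly: the backward direction is identical (apply Conjecture~\ref{SumFourierCon} to the lifts $F_1,F_2$, use $s(F_i)\le 2s(f)$, then Grolmusz's Conjecture~\ref{GrolCon} followed by Sherstov's Theorem~\ref{SherstovThm}). For the forward direction you use the decision-tree bound $\sum_S|\widehat{f}(S)|\le 2^{D(f)}$ together with $D(f)\le bs(f)^3$, whereas the paper instead routes through Theorem~\ref{MinFourierThm} and Parseval's identity; both arguments are equally short and valid.
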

\begin{proof}

Assume Conjecture~\ref{GrolCon}. Now, we want to prove that
Sensitivity Conjecture $\iff$ Conjecture~\ref{SumFourierCon}.\\

$\Rightarrow$ $ \begin{array}[t]{llll}
1 & = & \sum_S \widehat{f}(S)^2 & \text{by Parseval's Identity} \\
& \ge & (\min_S |\widehat{f}(S)|) (\# \{S \; |\; \widehat{f}(S) \neq 0\}) & \ \\
& \ge & 2^{-\poly(s(f))} (\sum_S |\widehat{f}(S)|) & \text{by Theorem~\ref{MinFourierThm} }\\
\end{array}
$

$\Leftarrow$
Consider two Boolean functions $F_1$ and $F_2$ on $2n$ variables
defined as in Theorem~\ref{SherstovThm}. It is easy to check that $s(f)
\leq s(F_1)$ and $s(F_2) \leq 2 s(f).$ Applying
Conjecture~\ref{SumFourierCon} to both $F_1$ and $F_2,$ we get: $\log
\sum |\widehat{F_1}(S)| \leq \poly (s(f))$ and $\log \sum
|\widehat{F_2}(S)| \leq \poly(s(f)).$ Now $bs(f) \leq \poly(s(f))$
follows from Theorem~\ref{RC} assuming Conjecture~\ref{GrolCon}.

\end{proof}

\subsection{Shi's Characterization of Sensitivity}
In this section we present some applications of Shi's work~\cite{Shi02}, which contains an interesting characterization of the sensitivity
of Boolean functions. 

A polynomial representing a Boolean function $f: \{0,1\}^n \rightarrow
\{0, 1\}$ (see Section~\ref{MeasuresEquivSec}) provides a multilinear
extension of $f$ from $\mathbb{R}^n$ to $\mathbb{R}$, which (abusing
notation) we denote by the same letter $f$.

Let $\ell=({\bf a}, {\bf b})$ denote the line segment in $[0,1]^n$
that starts at point $\bf a$ and ends at point $\bf b.$

\begin{defn}
The \emph{linear restriction} of $f$ on $\ell=({\bf a},{\bf b})$, $f_\ell:[0,1]\rightarrow
 \mathbb{R}$, is defined as
\[ f_\ell(t) := f((1-t){\bf a}+t{\bf b}),\;\; \forall t \in [0,1]. \]
\end{defn}
 
\noindent Denote the supremum norm of a function $g : [0,1] \rightarrow \mathbb{R}$ by
$||g||_\infty = \sup_{t \in [0,1]} |g(t)|$. Let $g'$ denote the first derivative of $g.$
\begin{thm} [Shi~\cite{Shi02}]
\label{ShiThm}
For every Boolean function $f$, $s(f) = \sup_\ell ||f^\prime_\ell||_\infty$.
\end{thm}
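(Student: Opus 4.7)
The plan is to establish the identity by proving two matching inequalities; in both directions the key is to relate $f'_\ell$ to the partial derivatives $\partial_i f$ of the multilinear extension, which at Boolean inputs simply record the sensitive coordinates.

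For the lower bound $s(f)\le\sup_\ell\|f'_\ell\|_\infty$, I would pick a witness $x\in\{0,1\}^n$ with $s(f,x)=s(f)$, let $S\subseteq[n]$ denote its sensitive coordinates, and take the line $\ell$ with endpoints $\mathbf a=x$ and $\mathbf b=x\oplus e_S$. Then $b_i-a_i=(-1)^{x_i}$ for $i\in S$ and $0$ otherwise, so $f'_\ell(0)=\sum_{i\in S}(-1)^{x_i}\partial_i f(x)$. A direct computation using the fact that at Boolean inputs $\partial_i f(x)=(-1)^{x_i}(f(x\oplus e_i)-f(x))$ shows every term in this sum equals $f(x\oplus e_i)-f(x)\in\{+1,-1\}$, and all terms share the common sign $1-2f(x)$. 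Hence $|f'_\ell(0)|=|S|=s(f)$.

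For the upper bound $\sup_\ell\|f'_\ell\|_\infty\le s(f)$, I would fix an arbitrary line $\ell=(\mathbf a,\mathbf b)$ in $[0,1]^n$ and $t\in[0,1]$, setting $\mathbf z=\mathbf a+t(\mathbf b-\mathbf a)\in[0,1]^n$. Then
\[ f'_\ell(t)=\sum_i(b_i-a_i)\,\partial_i f(\mathbf z), \]
and since $|b_i-a_i|\le 1$ we have $|f'_\ell(t)|\le\sum_i|\partial_i f(\mathbf z)|$. So it suffices to establish the key claim
\[ \sup_{\mathbf z\in[0,1]^n}\sum_i|\partial_i f(\mathbf z)|\le s(f). \]

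To prove this claim, I would use multilinearity: $\partial_i f$ is independent of $z_i$ and is affine in every other coordinate, so $|\partial_i f(\mathbf z)|$ is a convex function of each $z_j$ with the remaining variables held fixed. Consequently $\sum_i|\partial_i f(\mathbf z)|$ is coordinate-wise convex on $[0,1]^n$, and its supremum is attained at some vertex $\mathbf z^*\in\{0,1\}^n$. At such a Boolean point, each $|\partial_i f(\mathbf z^*)|$ is either $0$ or $1$ and equals $1$ precisely when coordinate $i$ is sensitive at $\mathbf z^*$, so the sum equals $s(f,\mathbf z^*)\le s(f)$. I expect the only non-routine step to be the convexity-to-vertex reduction, but it is entirely standard once one observes that each $|\partial_i f|$ depends affinely on every $z_j$ other than $z_i$.
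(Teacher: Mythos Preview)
Your proof is correct and uses the same core ingredients as the paper: the chain-rule expression $f'_\ell(t)=\sum_i (b_i-a_i)\,\partial_i f(\mathbf z)$, the multilinearity of $f$ (so each $\partial_i f$ is affine in every other coordinate), and a coordinate-wise convexity argument pushing the maximum to a Boolean point where $\sum_i|\partial_i f|$ becomes the sensitivity. The lower bound is likewise essentially identical; the paper takes $\mathbf b=\bar{\mathbf a}$ rather than $\mathbf b=\mathbf a\oplus e_S$, but non-sensitive coordinates contribute zero either way.

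The one organizational difference worth noting is in the upper bound. The paper first argues (``it is easy to check'') that it suffices to take lines with Boolean endpoints---this is itself a convexity-to-vertex reduction, applied to $|f'_\ell(t)|$ as a function of $(\mathbf a,\mathbf b)$ for fixed $t$---and then bounds $|f'_\ell(t)|$ via a probabilistic representation of the multilinear extension, writing $\partial_i f(\mathbf z)$ as $\mathbb E_{\mathbf p\sim D_t}[f(\mathbf p^{(i,1)})-f(\mathbf p^{(i,0)})]$ and pulling the expectation outside. Your route is more direct: you bound $|f'_\ell(t)|\le\sum_i|\partial_i f(\mathbf z)|$ for an \emph{arbitrary} line and then apply convexity once, to $\sum_i|\partial_i f(\mathbf z)|$ as a function of $\mathbf z\in[0,1]^n$. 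This avoids both the preliminary reduction to Boolean endpoints and the probabilistic detour, at no cost in rigor.
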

\begin{proof}
It is easy to check that it suffices to consider the lines that join two points of
the Boolean cube.

For $x \in [0, 1]^n,$ let $x^{(i,1)}$ ($x^{(i,0)}$) denote a vector whose $i^{th}$ coordinate
is $1$ ($0$) and the other coordinates match with those of $x.$ 
Let ${\bf a}, {\bf b} \in \{0, 1\}^n$ and $\ell = ({\bf a}, {\bf b})$ be the line joining $a$ and $b.$
\[ f'_\ell(t) = \mathop{\sum}_{i=1}^n (b_i - a_i) \cdot \frac{ \partial f}{\partial x_i}( (1-t){\bf a} + t {\bf b}). \]
Since $f$ is multilinear, we have:
\[ \frac{\partial f}{\partial x_i}( x ) = f(x^{(i, 1)}) - f(x^{(i, 0)}) .\]
Thus we have:
\begin{equation}\label{EllDeriv}
|f'_\ell(t) | \leq  \mathop{\mathbb{E}}_{{\bf p} \in D_t}\left[\mathop{\sum}_{i=1}^n \left|f({\bf p}^{(i,1)}) - f({\bf p}^{(i,0)})\right|\right],
\end{equation}
where $D_t$ denotes the following probability distribution on Boolean cube: for each
$k,$ $\Pr(p_k = 1) = (1-t)a_k + b_k.$ 
Notice that the right hand side of (\ref{EllDeriv}) is at most $s(f).$

For the other direction let ${\bf a} \in \{0, 1\}^n$ and 
$\bf b$ be obtained from $\bf a$ by flipping each bit.
It is easy to check that:
\[ f'_\ell(0) = \mathop{\sum} |f({\bf a} \oplus e_i) - f(a)| = s(f, {\bf a}).\]
Choosing a vector ${\bf a}$ with maximum sensitivity completes the proof.
\end{proof}
Combining Theorem~\ref{ShiThm} with Conjecture~\ref{MinFourierCon}
puts the original ``sensitivity versus block sensitivity'' problem into an
 analytic setting.

\begin{defn}
The {\em approximate degree of linear restrictions} of a Boolean function $f$ is defined as follows:
\[ \overline \deg (f) = \max_\ell \min \left\{ \deg(g) \left| \:  g \in \mathbb{R}[t], ||f_\ell-g||_\infty \le 1/6  \right. \right\}. \]
\end{defn}

\begin{thm}[Shi~\cite{Shi02}] 
\label{ShiEquiv}
The complexity measures $\overline \deg(f)$ and $s(f)$ are polynomially
related.
\end{thm}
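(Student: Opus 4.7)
The plan is to prove the two inequalities $\overline{\deg}(f) \leq O(s(f))$ and $s(f) \leq O(\overline{\deg}(f)^2)$ separately; the former is the easy direction and the latter is the delicate one.

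For the easy direction, Theorem~\ref{ShiThm} directly gives $\|f'_\ell\|_\infty \leq s(f)$ for every line $\ell$, so each restriction $f_\ell : [0,1] \to [0,1]$ is $s(f)$-Lipschitz. By Jackson's theorem from classical approximation theory, any $L$-Lipschitz function on $[0,1]$ admits a polynomial approximation of degree $O(L)$ within sup-norm error $1/6$. Applied with $L = s(f)$, this produces, for each $\ell$, a polynomial of degree $O(s(f))$ that approximates $f_\ell$ to within $1/6$, and taking the maximum over $\ell$ yields $\overline{\deg}(f) = O(s(f))$.

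For the harder direction, I would use Theorem~\ref{ShiThm} to rewrite $s(f) = \sup_\ell \|f'_\ell\|_\infty$ and then argue that each $\|f'_\ell\|_\infty$ is $O(\overline{\deg}(f)^2)$. Fix a line $\ell$ and let $g$ be a polynomial of degree $d := \overline{\deg}(f)$ with $\|f_\ell - g\|_\infty \leq 1/6$. Because $f_\ell$ is the restriction of the multilinear extension of a Boolean function and hence takes values in $[0,1]$, we have $\|g\|_\infty \leq 7/6$, and Markov's inequality for polynomials on $[0,1]$ yields $\|g'\|_\infty \leq 2d^2 \cdot \tfrac{7}{6} = O(d^2)$. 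To transfer this into a bound on $\|f'_\ell\|_\infty$, specialize (via the proof of Theorem~\ref{ShiThm}) to the line from the most sensitive point ${\bf a}$ to ${\bf a}\oplus {\bf 1}$, where $|f'_\ell(0)| = s(f)=: S$. Assuming WLOG that $f_\ell(0) = 0$, the combination $f'_\ell(0)=S$ and $f_\ell\ge 0$ forces $f_\ell$ to rise from $0$ by a constant amount on an interval of length $O(1/S)$; the sup-norm approximation then forces $g$ to rise by roughly the same constant amount on the same short interval (up to the $1/3$ total error), producing a point at which $|g'| \ge \Omega(S)$. Comparing with the Markov bound $\|g'\|_\infty = O(d^2)$ gives $S \leq O(d^2)$, as desired.

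The main obstacle is precisely the last transfer step, because sup-norm approximation does not in general preserve derivatives: the error polynomial $f_\ell - g$ could in principle have arbitrarily large derivative while remaining bounded by $1/6$ in sup norm (e.g., a scaled high-degree Chebyshev polynomial). The argument must therefore exploit features specific to multilinear extensions of Boolean functions, namely the non-negativity and boundedness $f_\ell \in [0,1]$ and the Boolean value $f_\ell(0)\in\{0,1\}$ at the endpoint, to convert the instantaneous derivative $f'_\ell(0)$ into a finite difference that is provably inherited by the low-degree approximant $g$.
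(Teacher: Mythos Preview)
The paper does not prove Theorem~\ref{ShiEquiv}; it merely states the result and attributes it to Shi~\cite{Shi02}. So there is no in-paper argument to compare against, and your outline is essentially the standard one: Jackson's theorem for $\overline{\deg}(f)=O(s(f))$, and a Markov-inequality argument for $s(f)=O(\overline{\deg}(f)^2)$.

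That said, the hard direction has a genuine gap precisely where you flag it. The assertion ``$f_\ell(0)=0$, $f'_\ell(0)=S$, and $f_\ell\ge 0$ force $f_\ell$ to rise by a constant on an interval of length $O(1/S)$'' is false for general $[0,1]$-valued polynomials: take $p(t)=St(1-t)^{M}$ with $M\gg S$, which satisfies all three hypotheses yet $\|p\|_\infty=O(S/M)$. Your choice of line, the full diagonal $\ell=(\mathbf{a},\mathbf{a}\oplus\mathbf{1})$, does not rule this out, since along it only an $S/n$ fraction of the Hamming-distance-$1$ points from $\mathbf{a}$ are sensitive, and when $n\gg S$ no constant rise is forced.

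The missing idea is to take instead the diagonal of the \emph{sensitive} subcube, $\ell=(\mathbf{a},\mathbf{a}\oplus e_B)$ with $B$ the set of $S$ sensitive coordinates. Writing the multilinear extension probabilistically,
\[
f_\ell(t)=\sum_{k=0}^{S}\binom{S}{k}t^{k}(1-t)^{S-k}\,p(k),
\qquad p(k)=\mathop{\mathbb{E}}_{y:\,|y|=k}\bigl[f(\mathbf{a}\oplus y)\bigr],
\]
one has $p(0)=0$ and, crucially, $p(1)=1$ because \emph{every} direction in $B$ is sensitive. The $k=1$ term alone already gives $f_\ell(1/S)\ge(1-1/S)^{S-1}\ge 1/e$, which is exactly the finite difference you need: any degree-$d$ polynomial $g$ with $\|f_\ell-g\|_\infty\le 1/6$ then satisfies $g(1/S)-g(0)\ge 1/e-1/3>0$, so by the mean value theorem and Markov's inequality $\Omega(S)\le\|g'\|_\infty\le O(d^{2})$. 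You correctly anticipated that Boolean structure must be invoked; the specific structure is ``all neighbours along the chosen line are sensitive,'' which the full diagonal does not provide.
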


Observe that, unlike all previous equivalence results,
Theorem~\ref{ShiEquiv} gives a complexity measure
polynomially related to $s(f)$ rather than $bs(f)$.  It follows that the
following conjecture is equivalent to the Sensitivity Conjecture.

\begin{neutrey_con}[Shi~\cite{Shi02}]
For every Boolean function $f$, \[\widetilde \deg (f) \le \poly \left(\overline \deg (f)\right).\]
\end{neutrey_con}

\subsection{Subgraphs of the $n$-cube}
\label{SubgraphSec}

Let $Q_n$ denote the $n$-cube graph, i.e., $V(Q_n) = \{0,1\}^n$ and
two vertices are adjacent if the corresponding vectors differ in
exactly one position. Denote the maximum degree of graph $G$ by
$\Delta(G)$. For a subgraph $H$ of a graph $G$ define
$\Gamma(H)=\max\{\Delta(H),\Delta(G-H)\}$. Gotsman and
Linial~\cite{Gotsman92} proved the following remarkable equivalence.

\begin{thm}[Gotsman and Linial~\cite{Gotsman92}]
\label{GraphThm}
The following are equivalent for any monotone function $h: \mathbb{N}
\rightarrow \mathbb{R}$:
\begin{description}
\item[A] For any induced subgraph $G$ of $Q_n$ with $|V(G)|\neq 2^{n-1}$ we have $\Gamma(G) \geq h(n)$.
\item[B] For any Boolean function $f$ we have $s(f) \ge h(\deg(f))$.
\end{description}
\end{thm}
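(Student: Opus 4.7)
The plan is to identify an involution on Boolean functions that directly converts statement A into statement B. Define the parity function $\pi(x) := x_1 \oplus \cdots \oplus x_n$ and set $\tau f := f \oplus \pi$; this is an involution on $n$-variable Boolean functions.

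The proof rests on two identities. First, for every $v \in \{0,1\}^n$, $s(\tau f, v) = n - s(f, v)$, since flipping any single coordinate flips $\pi$, so sensitivity at $v$ is swapped between $f$ and $\tau f$ coordinate by coordinate. Second, $\deg(f) = n$ if and only if $\tau f$ is size-unbalanced, i.e., $|(\tau f)^{-1}(1)| \neq 2^{n-1}$. This follows because both conditions are equivalent to the imbalance $|f^{-1}(1) \cap E| - |f^{-1}(1) \cap O|$ being nonzero, where $E$ and $O$ are the even- and odd-weight vertices of $Q_n$: for the degree side this is, up to sign, the coefficient of $x_1 \cdots x_n$ in the multilinear representation of $f$; for the size side a direct count gives $|(\tau f)^{-1}(1)| - 2^{n-1} = |f^{-1}(1) \cap E| - |f^{-1}(1) \cap O|$. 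Writing $G_f := Q_n[f^{-1}(1)]$, the first identity yields the corollary $\Gamma(G_f) = n - \min_v s(f, v) = s(\tau f)$.

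With this in hand, both directions are immediate. For B $\Rightarrow$ A, let $G$ be an induced subgraph of $Q_n$ with $|V(G)| \neq 2^{n-1}$; its indicator $g$ is size-unbalanced, so by the second identity $\deg(\tau g) = n$, and B applied to $\tau g$ yields $\Gamma(G) = s(\tau g) \geq h(\deg \tau g) = h(n)$. For A $\Rightarrow$ B, take any $f$ with $\deg f = k$ and restrict $f$ to a $k$-dimensional subcube on which a witnessing top-monomial coefficient is preserved (fixing the other variables to $0$ works); the restriction $f'$ on $\{0,1\}^k$ satisfies $\deg f' = k$ and $s(f') \le s(f)$. Then $\tau f'$ is size-unbalanced on $Q_k$, so A applied in dimension $k$ gives $s(f') = \Gamma(G_{\tau f'}) \geq h(k)$, whence $s(f) \geq h(\deg f)$.

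The only real obstacle is spotting the right involution. Once one observes that $\Gamma(G_f)$ computes $n - \min_v s(f, v)$ and then seeks a transformation that trades min for max sensitivity and simultaneously trades size-balance for degree-deficiency, XOR-with-parity is essentially forced; everything else is bookkeeping, including the standard restriction step in A $\Rightarrow$ B.
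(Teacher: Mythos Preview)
Your proof is correct and follows essentially the same approach as the paper's: the paper multiplies $f$ (in the $\pm 1$ range) by the parity character $p(x)=(-1)^{\sum x_i}$, which is exactly your $\tau f = f\oplus\pi$ in the $0/1$ range, and then uses the identical facts $s(fp,x)=n-s(f,x)$ and $\widehat{fp}(\emptyset)=\widehat f([n])$ to connect $\Gamma(G)$ with sensitivity and size-balance with full degree. The restriction-to-a-subcube step you use for $A\Rightarrow B$ is likewise exactly the paper's reduction of $B$ to the full-degree case $B'$.
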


\begin{proof} \

  Statement {\bf B} is equivalent to the following:
\begin{description}
\item[B'] For any Boolean function $f$ with $\deg(f) = n$ we have $s(f)
  \ge h(n)$.
\end{description}
Clearly, {\bf B} implies {\bf B'}. To prove the reverse implication,
let $f$ be a Boolean function of degree $d$. Fix a monomial of degree
$d$ of the representing polynomial of $f$. Without loss of generality we
may assume the monomial is $x_1 \cdots x_d$. Define $g(x_1, \ldots,
x_d) := f(x_1, \ldots, x_d, 0, \ldots, 0)$. Then, $s(f) \ge s(g) \ge
h(d)$, as desired.

{\bf A} $\Rightarrow$ {\bf B'} We prove the contrapositive. Given a
Boolean function $f$ with $s(f) < h(n)$, consider an induced subgraph
$G$ of $Q_n$ with
\[V(G)=\{x\in \{0,1\}^n \; |\; f(x)p(x)=+1\},\]
 where $p(x)=(-1)^{\sum x_i}$ is the parity function (as in
 Section~\ref{AnalSecs}, we take the range of Boolean functions to be
 $\{+1, -1\}$).  Observe that $\widehat{f}(I)=\widehat{fp}([n]-I)$ for
 any subset $I \subseteq [n]$. Hence, $\widehat{fp}(\emptyset) =
 \widehat{f}([n]) \neq 0$, since $\deg(f) = n$. Straight from the
 defintion of Fourier coefficients, $\widehat{f}(\emptyset) =
 \mathbb{E}_x[f(x)]$, so $|V(G)| \neq 2^{n-1}$. Furthermore, $s(fp, x)
 = n - s(f,x)$ and $\deg_G(x) = n - s(fp, x) = s(f,x)$. Thus,
 $\Gamma(G) < h(n)$.

{\bf A} $\Leftarrow$ {\bf B'} Observe that the steps in the proof of
{\bf A} $\Rightarrow$ {\bf B'} are reversible.
\end{proof}

The proof of Theorem~\ref{GraphThm} translates a Boolean function with
a polynomial gap between degree and sensitivity into a graph with the
same polynomial gap between $\Gamma$ and $n$, and vice versa. For
example, observe that Rubinstein's function (see Example~\ref{RubEx}
in Section~\ref{ExamplesSec}) has sensitivity $\Theta(n)$ and full
degree, which can be easily verified by a direct computation of
$\widehat{f}([n])$. Therefore, Rubinstein's function can be used to
obtain a graph $G$ with the surprising property $\Gamma(G) =
\Theta(\sqrt{n})$. Chung {\em et al.}~\cite{Chung88} independently
constructed a graph $G$ with $\Gamma(G)<\sqrt{n}+1$. Their example can
be also obtained from Theorem~\ref{GraphThm} by applying the reduction
in the proof of {\bf A} $\Rightarrow $ {\bf B'} to the AND-of-ORs
function (see Example~\ref{AndOfOr} in Section~\ref{ExamplesSec}), but
note that the Gotsman-Linial theorem was not available at the time
when Chung et. al. gave their construction.

It immediately follows that the following
conjecture is equivalent to the Sensitivity Conjecture, by taking $h$ to
be an inverse polynomial in Theorem~\ref{GraphThm}.

\begin{neutrey_con}
\label{GraphCon}
For any induced subgraph $G$ of $Q_n$ such that \\ $|V(G)| \neq
2^{n-1}$ we have $\Gamma(G) \ge \poly^{-1}(n)$.
\end{neutrey_con}


\subsection{Two-colorings of Integer Lattices}

Two points $a, b \in \mathbb{Z}^d$ are called neighbors if $||a-b||_2
= 1$. A two-coloring $C$ of $\mathbb{Z}^d$ with colors red and blue is
non-trivial if the origin is colored red, and there is a point colored
blue on each of the coordinate axes. Sensitivity of a point $a \in
\mathbb{Z}^d$ under coloring $C$, denoted by $S(a,C)$, is the number
of neighbors of $a$ that are colored differently from $a$. 
\begin{defn}
The {\em sensitivity of a coloring} is defined by $S(C) = \max_a S(a,C)$.
\end{defn}
Aaronson~\cite{Aaronson2010} stated the following question, a
positive answer to which would imply the Sensitivity Conjecture. For completeness, we also present a reduction.

\begin{frowney_question}[Aaronson]
\label{LatticeCon}
Does every non-trivial coloring of $\mathbb{Z}^d$ have sensitivity
at least $d^{\Omega(1)}$?
\end{frowney_question}



\begin{claim}[Aaronson]
  A positive answer to Question~\ref{LatticeCon} implies that
  Sensitivity Conjecture.
\end{claim}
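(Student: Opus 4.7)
My plan is to argue the contrapositive: suppose $f : \{0,1\}^n \to \{0,1\}$ violates the Sensitivity Conjecture, so $bs(f) = b$ is super-polynomial in $s(f) = s$; from $f$ I would construct a non-trivial two-coloring $C$ of $\mathbb{Z}^d$ for some $d$ growing like $b$, with $S(C) = \poly(s)$, contradicting the hypothesized positive answer to Question~\ref{LatticeCon}, which would force $S(C) \ge d^{\Omega(1)}$.

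First, I would fix an input $x$ achieving $bs(f, x) = b$ via pairwise disjoint minimal sensitive blocks $B_1, \ldots, B_b$, normalized so $f(x) = 0$. The natural candidate coloring is the block-parity lift: define $\phi : \mathbb{Z}^b \to \{0,1\}^n$ by $\phi(a) = x \oplus \bigoplus_{i : a_i \text{ odd}} e_{B_i}$ and $C(a) = f(\phi(a))$, with red representing $0$ and blue representing $1$. Non-triviality follows immediately: $C(\mathbf{0}) = f(x) = $ red, and $C(\pm e_i) = f(x \oplus e_{B_i}) = $ blue since $B_i$ is sensitive at $x$. Because each lattice neighbor $a \pm e_i$ maps under $\phi$ to the single-block flip $\phi(a) \oplus e_{B_i}$, we have
\[
S(a, C) = 2 \, \big| \{ i : B_i \text{ is a sensitive block of } f \text{ at } \phi(a) \} \big|.
\]

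The hard part will be bounding $S(C) = \max_a S(a, C)$ by $\poly(s)$. Trivially this is only $2 \cdot bs(f) = 2b$, which is vacuous. Worse, the naive parity lift is already defeated at $a = \mathbf{0}$, where $\phi(\mathbf{0}) = x$ and every $B_i$ is sensitive there, so $S(\mathbf{0}, C) = 2b$. More generally, any periodic lift $C(a) = g(a \bmod 2)$ of a Boolean $g$ on $d$ variables has $S(C) = 2 s(g)$ and is non-trivial only when $s(g, \mathbf{0}) = d$, forcing $S(C) \ge 2d$ and making Question~\ref{LatticeCon} trivially satisfied. To overcome this I would need a genuinely non-periodic construction on $\mathbb{Z}^b$: one promising lever is that minimality forces $|B_i| \le s$ (since every bit of $B_i$ becomes sensitive for $f$ at $x \oplus e_{B_i}$), so the total ``block mass'' $\sum_i |B_i|$ is at most $bs$; the idea would be to spread the block flips across disjoint lattice regions so that at each $a$ only $\poly(s)$ blocks are simultaneously in their ``flipping transition,'' perhaps combined with Shi's analytic characterization (Theorem~\ref{ShiThm}) to control how the multilinear extension of $f$ behaves along these lattice lines.

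Granted such a construction yielding a non-trivial coloring of $\mathbb{Z}^{\poly(b)}$ with $S(C) \le \poly(s)$, the assumed positive answer to Question~\ref{LatticeCon} would give $\poly(s) \ge b^{\Omega(1)}$, which rearranges to $bs(f) \le \poly(s(f))$ and establishes the Sensitivity Conjecture.
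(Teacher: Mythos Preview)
Your proposal has a genuine gap, and you have correctly diagnosed where it lies: the parity lift $\phi(a) = x \oplus \bigoplus_{i : a_i \text{ odd}} e_{B_i}$ makes lattice neighbors correspond to \emph{whole-block} flips of $\phi(a)$, so $S(\mathbf{0},C)=2b$ and the construction is vacuous. Your suggested remedies (non-periodic spreading of block flips, invoking Shi's analytic characterization) are not the right lever and would not salvage this map.

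The missing idea is simple and decisive: replace the parity lift by a \emph{Gray-code} lift. For each block $B_i$ fix a cyclic Gray code $\gamma_i:\mathbb{Z}\to\{0,1\}^{|B_i|}$ with $\gamma_i(0)=x|_{B_i}$, and set $\phi(a)|_{B_i}=\gamma_i(a_i)$ (and $\phi(a)|_R=x|_R$ off the blocks). Now a lattice step $a\to a\pm e_i$ changes $\phi(a)$ in exactly \emph{one bit} of $B_i$, so neighbors in $\mathbb{Z}^b$ map to Hamming-neighbors in $\{0,1\}^n$. Consequently every differently-colored lattice neighbor of $a$ witnesses a sensitive \emph{coordinate} of $f$ at $\phi(a)$, and since each coordinate can be hit by at most the two directions $\pm e_i$, one gets $S(a,C)\le 2\,s(f,\phi(a))\le 2s(f)$ outright, with no further argument. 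Non-triviality holds because $\gamma_i$ eventually visits $(x\oplus e_{B_i})|_{B_i}$, putting a blue point on each axis. This is exactly the paper's construction; once you see the Gray code, the claimed bound $S(C)\le 2s(f)$ is immediate and the contrapositive closes as you outlined.
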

\begin{proof} Given a Boolean function $f$ on $n$ variables, let $x$
  be an input, on which $f$ achieves the highest block sensitivity $b
  = bs(f)$.  Let $S_1, \ldots, S_b$ be pairwise disjoint sensitive
  blocks of $f$ on $x$, and let $R = [n] - (\bigcup_i S_i)$. Let
  $\gamma_i : \mathbb{Z} \rightarrow \{0,1\}^{|S_i|}$ represent a Gray
  code with $\gamma_i(0) = x|_{S_i}$. Consider the following mapping
  $\phi : \mathbb{Z}^b \rightarrow \{0,1\}^n$: a point $a \in
  \mathbb{Z}^b$ is mapped to a Boolean string $y \in \{0, 1\}^n$ with
  $y|_{S_i} = \gamma_i(a_i)$ and $y|_R = x|_R$. Finally, obtain
  coloring $C$ of $\mathbb{Z}^b$ by composing $f$ with
  $\phi$. Clearly, $C$ is non-trivial and $s(C) \le 2 s(f)$, hence
  $bs(f) = b \le \poly(s(C)) \le \poly(s(f))$.
\end{proof}

\setcounter{thm}{0}
\numberwithin{thm}{section}
\section{Some Boolean Functions}
\label{ExamplesSec}

In this section we present some interesting examples of Boolean
functions. They provide lower or upper bounds for various complexity
measures, and some of them appear in more than one context.

The following function was exhibited by Rubinstein~\cite{rub}. It was
discussed in Section~\ref{ProgressSec} and Section~\ref{SubgraphSec}
of this note.
\begin{example}[Rubinstein's function]
\label{RubEx}
Rubinstein's function is defined on $n = k^2$ variables,
which are divided into $k$ blocks with $k$ variables each. The value
of the function is $1$ if there is at least one block with exactly two
consecutive $1$s in it, and it is $0$ otherwise.
\end{example}

Block sensitivity of Rubinstein's function on $k^2$ variables is
$\Theta(k^2)$ (hence, certificate complexity and decision tree
complexity is also $\Theta(k^2)$) and sensitivity is $\Theta(k)$. It
has full degree as can be verified by a direct computation of Fourier
coefficient of the set $[k^2]$.

The following folklore example was discussed in
Section~\ref{MeasuresEquivSec} and Section~\ref{SubgraphSec} of this
paper.
\begin{example}[AND-of-ORs function]
\label{AndOfOr}
AND-of-ORs function is defined on $k$ blocks of $k$ variables each:
\[f(x_{11}, \ldots, x_{kk}) = \bigwedge_{i=1}^k \bigvee_{j=1}^k x_{ij}. \]
\end{example}

The block sensitivity and sensitivity of AND-of-ORs function on $k^2$
variables is $k$. AND-of-ORs has full degree and hence its decision tree
complexity is also $k^2$. The certificate complexity of AND-of-ORs function
is $k$.

\begin{defn}
  For a Boolean function $f:\{0,1\}^m \rightarrow \{0,1\}$ and a
  Boolean function $g:\{0,1\}^n \rightarrow \{0,1\}$, we define the {\em
    composition} function $f \diamond g$ on $mn$ variables as follows:
  \[ f \diamond g(x_{11}, \ldots, x_{mn}) = f\big(g(x_{11}, \ldots, x_{1n}),
  \ldots, g(x_{m1}, \ldots, x_{mn})\big).\]
\end{defn}

Kushilevitz exhibited a function $f$ that provides the largest gap in
the exponent of a polynomial in $\deg(f)$ that gives an upper bound on
$bs(f)$. Never published by Kushilevitz, the function appears in
footnote 1 of the Nisan-Wigderson paper~\cite{Nisan95}. It was
discussed in Section~\ref{MeasuresEquivSec} of this paper.
\begin{example}[Kushilevitz's function]
\label{Kushilevitz}
Define an auxiliary function $h$ on $6$ variables:
\[
\begin{array}{lll}
h(z_1, \ldots, z_6) &= &\sum_i z_i - \sum_{ij} z_i z_j + z_1 z_3 z_4 + z_1 z_2 z_5 + z_1 z_4 z_5 + z_2 z_3 z_4 +\\
& &  z_2 z_3 z_5 + z_1 z_2 z_6 + z_1 z_3 z_6 + z_2 z_4 z_6 + z_3 z_5 z_6 + z_4 z_5 z_6.
\end{array}
\]
Kushilevitz's function is defined as $h \diamond h \diamond \ldots \diamond h$.
\end{example}

Observe that the auxiliary function $h$ on 6 variables in
Kushilevitz's example has degree 3 and full sensitivity on the {\bf 0}
input. Let the function $f$ be obtained by composing $h$ with itself
$k$ times. It is defined on $n = 6^k$ variables and has full
sensitivity, block sensitivity, decision tree complexity and
ceritificate complexity. Degree of $f$ is $3^k = n^{\log_6 3}$.

The following function was constructed by Chakraborty~\cite{chak}. It
was discussed in Section~\ref{ProgressSec} of this note.
\begin{example}[Chakraborty's function]
\label{ChakrabortyEx}
Define an auxiliary function $h$ on $k^2$ variables by a regular
expression:
\[
h(z_{11}, \ldots, z_{kk}) = 1 \iff z \in 110^{k-2} (11111(0+1)^{k-5})^{k-2} 11111(0+1)^{k-8}111.
\]
Chakraborty's function $f$ on $n \ge k^2$ variables is defined as follows:
\[f(x_0, \ldots, x_{n-1}) = 1 \iff \big( \exists i \in [n] \big) \big( g(x_{i}, x_{(i+1)}, \ldots, x_{(i+k^2)}) = 1 \big), \]
where indices in the arguments of function $g$ are taken modulo $n$.
\end{example}

Chakraborty shows that for $n=k^3$ his function has sensitivity
$\Theta(n^{1/3})$~\cite{chak}, block sensitivity $\Theta(n^{2/3})$ and
certificate complexity $\Theta(n^{2/3})$~\cite{chakmaster}.



\noindent {\bf Acknowledgements:} We would like to thank Laci Babai
for reviewing preliminary versions of this note, giving helpful
comments, and introducing previous generations of the University of
Chicago students to the Sensitivity Conjecture, including David
Rubinstein, Sandy Kutin, and Sourav Chakraborty. Just as Laci seemed
to give up on popularizing the conjecture, Sasha Razborov rekindled
the flame. We would like to thank Sasha for introducing us to the
Sensitivity Conjecture.

\bibliographystyle{acm}
\bibliography{equiv.bib}

\end{document}